\newcommand{\sder}[1]               {{\begin{center}
                                      \begin{tabular}{llr}
                                        #1
                                      \end{tabular}
                                      \end{center}}}
\newcommand{\culprit}                   {\textrm{culprit}^{J}_{C}}
\newcommand{\inplace}                   {\textrm{inplace}^{J}_{C}}
\newcommand{\capable}                   {\textrm{capable}^{J}_{C}}
\newcommand{\motive}                   {\textrm{motive}^{J}_{C}}
\newcommand{\poor}                   {\textrm{poor}^{J}}
\newcommand{\friend}                   {\textrm{friend}^{J}}
\newcommand{\guilty}                   {\textrm{guilty}^{J}_{C'}}
\newcommand{\similar}                   {\textrm{similar}^{C}_{C'}}
\newcommand{\divergence}              {\textrm{divergence}^{J}_{C}}
\newcommand{\Jculprit}                   {t\cdot\textrm{culprit}^{J}_{C}}
\newcommand{\Jinplace}                   {t\cdot\textrm{inplace}^{J}_{C}}
\newcommand{\Jcapable}                   {t\cdot\textrm{capable}^{J}_{C}}
\newcommand{\Jmotive}                   {t\cdot\textrm{motive}^{J}_{C}}
\newcommand{\Jpoor}                   {t\cdot\textrm{poor}^{J}}
\newcommand{\Jfriend}                   {t\cdot\textrm{friend}^{J}}
\newcommand{\Jguilty}                   {t'\cdot\textrm{guilty}^{J}_{C'}}
\newcommand{\Csimilar}                   {t\cdot\textrm{similar}^{C}_{C'}}
\newcommand{\Jdivergence}              {t\cdot\textrm{divergence}^{J}_{C}}
\newcommand{\Cut}                   {\textrm{Cut}}
\newcommand{\rAT}                   {\textrm{AT}}
\newcommand{\rAI}                   {\textrm{AR}}
\newcommand{\rAC}                   {\textrm{AC}}
\newcommand{\rBCL}                   {\textrm{SCL}}
\newcommand{\rBCR}                   {\textrm{SCR}}
\newcommand{\rBKR}                   {\textrm{SKR}}
\newcommand{\rBKL}                   {\textrm{SKL}}
\newcommand{\rKL}                   {\textrm{KL}}
\newcommand{\rBP}                   {\textrm{SP}}
\newcommand{\rKP}                   {\textrm{KP}}
\newcommand{\rNB}                   {\textrm{NS}}
\newcommand{\rNK}                   {\textrm{NK}}
\newcommand{\CR}                   {\textrm{CR}}
\newcommand{\ER}                   {\textrm{ER}}
\newcommand{\EC}                   {\textrm{EC}}
\newcommand{\ET}                   {\textrm{ET}}
\newcommand{\ES}                   {\textrm{ES}}
\newcommand{\rAML}                   {\textrm{AML}}
\newcommand{\rAMR}                   {\textrm{AMR}}
\newcommand{\var}                   {{\textrm{var}}}
\newcommand{\lap}                  {\mathbin{\trianglelefteq}}
\newcommand{\lapdiv}                  {\mathbin{\trianglelefteq}_{\textrm{divergence}^{\textrm{J}}_{\textrm{C}}}}
\newcommand{\lapmot}                  {\mathbin{\trianglelefteq}_{\textrm{motive}^{\textrm{J}}_{\textrm{C}}}}
\newcommand{\Ax}                    {\textrm{Ax}}
\newcommand{\qurtains}             {\hfill{QED}}
\newtheorem{definition}{\vspace{1mm}Definition}[section]
\newtheorem{example}[definition]{\vspace{1mm}Example}
\newtheorem{prop}[definition]{\vspace{1mm}Proposition}
\newenvironment{proof}{\begin{trivlist}\item{\bf Proof:}}{\qurtains\end{trivlist}}
\newcommand{\lnecb}                  {\mathop{\boxdot}}
\newcommand{\sat}                    {\Vdash}
\newcommand{\ent}                    {\vDash}
\newcommand{\der}                    {\vdash}
 \newcommand{\FE}{\mathrm{SC_1}}
\newcommand{\Sans}{\mathrm{SC_2}}
\newcommand{\DHS}{\mathrm{SD}}
\newcommand{\R}{\mathrm{SA_1}}
\newcommand{\ROne}{\mathrm{SA_2}}
\newcommand{\Russia}{\mathrm{A}}
\newcommand{\Sandworm}{\mathrm{HG}}
\newcommand{\BBC}{\mathrm{Ag}}
\newcommand{\U}{\mathrm{NV}}
\newcommand{\In}{\mathrm{F_1}}
\newcommand{\SansIn}{\mathrm{F_2}}
\newcommand{\SC}{\mathrm{SC_{3}}}
\newcommand{\Blog}{\mathrm{Bl}}
\newcommand{\geoSourceIP}{\textrm{geoSourceIP}}
\newcommand{\Att}{\textrm{PGA}}
\newcommand{\usedMalware}{\textrm{usedMalware}}
\newcommand{\BEtwo}{\textrm{BE2}}
\newcommand{\BEthree}{\textrm{BE3}}
\newcommand{\possibleCulprit}{\textrm{possibleCulprit}}
\newcommand{\KillDisk}{\textrm{KillDisk}}
\newcommand{\simil}{\textrm{similar}}
\newcommand{\usedBy}{\textrm{usedBy}}
\newcommand{\targetVictim}{\textrm{targetVictim}}
\newcommand{\Opponent}{\textrm{Opponent}}
\newcommand{\econLoss}{\textrm{econLoss}}
\newcommand{\repLoss}{\textrm{repLoss}}
\newcommand{\limitedDamage}{\textrm{limitedDamage}}
\newcommand{\found}{\textrm{found}}
\newcommand{\spearPhishing}{\textrm{spearPhishing}}
\newcommand{\infection}{\textrm{infection}}
\newcommand{\backdoor}{\textrm{backdoor}}
\newcommand{\exploitedVuln}{\textrm{exploitedVuln}}
\newcommand{\econMotives}{\textrm{econMotives}}
\newcommand{\attacker}{\textrm{econMotives}}
\newcommand{\victimAccess}{\textrm{victimAccess}}
\newcommand{\Victim}{\textrm{U}}
\newcommand{\phishEmails}{\textrm{phishEmails}}
\newcommand{\HMIvuln}{\textrm{HMIvuln}}
\newcommand{\news}{\textrm{news}}
\newcommand{\Blackout}{\textrm{Blackout}}
\newcommand{\polConflict}{\textrm{polConflict}}
\newcommand{\started}{\textrm{started}}
\newcommand{\logic}{Time-Stamped Claim Logic}
\def\aiOrcid{\includegraphics[scale=0.5]{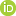}}
\newcommand{\orcidID}[1]{\href{https://orcid.org/#1}{\textcolor[HTML]{A6CE39}{\aiOrcid}}}
\begin{document}

\title{Time-Stamped Claim Logic} 

\author{João Rasga$^1$ \orcidID{0000-0002-1239-8496}, 
Cristina Sernadas$^1$ \orcidID{0000-0002-5510-3512},
Erisa Karafili$^2$ \orcidID{0000-0002-8250-4389}, 
Luca Vigan\`o$^3$ \orcidID{0000-0001-9916-271X}
\\[1.5mm]
$^1${\scriptsize Departamento de Matemática, Instituto Superior Técnico and CMAFcIO, ULisboa, Portugal}\\[0mm] 
{\scriptsize \tt \{joao.rasga,cristina.sernadas\}@tecnico.ulisboa.pt}\\[0mm]
$^2${\scriptsize Department of Computing, Imperial College London, UK}\\[0mm]
{\scriptsize \tt  e.karafili@imperial.ac.uk}\\[0mm]
$^3${\scriptsize Department of Informatics, King's College London, UK}\\[0mm]
{\scriptsize \tt  luca.vigano@kcl.ac.uk}}

\date{ } 

\maketitle

\begin{abstract} 

The main objective of this paper is to define a logic for reasoning about distributed time-stamped claims. Such a logic is interesting for theoretical reasons, i.e., as a logic \emph{per se}, but also because it has a number of practical applications, in particular when one needs to reason about a huge amount of pieces of evidence collected from different sources, where some of the pieces of evidence may be contradictory and some sources are considered to be more trustworthy than others. We introduce the Time-Stamped Claim Logic including a sound and complete sequent calculus that allows one to reduce the size of the collected set of evidence and removes inconsistencies, i.e., the logic ensures that the result is consistent with respect to the trust relations considered.  In order to show how Time-Stamped Claim Logic can be used in practice, we consider a concrete cyber-attribution case study.
\\[2mm]
{\bf Keywords}: Time-stamped claim logic, labelled deduction, Gentzen calculus, evidence logic, cyber security.\\[2mm]
\end{abstract}

\section{Introduction}\label{sec:introduction} 

The main objective of this paper is to define a logic for reasoning about distributed time-stamped claims. Such a logic is interesting for theoretical reasons, i.e., as a logic \emph{per se}, but also because it has a number of practical applications, most notably the ability to reason about the attribution of cyber-attacks. 

When reasoning about a cyber-attack, a digital forensics analyst typically collects a huge amount of pieces of evidence from different sources. Some of the  pieces of evidence may be contradictory and the analyst might consider some sources to be more trustworthy than others. Inferring conclusions from such evidence thus requires particular care and time. A similar problem is faced by historians when they are trying to date particular historic events.

The Time-Stamped Claim Logic that we introduce in this paper provides the forensics analyst with a sound and complete means to reduce the size of the collected set of evidence and remove inconsistencies, i.e., the logic ensures that the result is consistent with respect to the trust relations that the analyst considers to hold.
More specifically, the Time-Stamped Claim Logic is a monotonic propositional logic whose language  contains assertions of different kinds: labeled assertions to expressing statements of agents about time-stamped claims and relational assertions indexed by propositional symbols that are used to relate agents. In particular, a trust relation with respect to each propositional subject is defined between agents.

We formalize a Gentzen calculus for our logic, which allows one to infer a time-stamped claim whenever (i) there is an agent that states the claim and (ii) each agent that is more trustworthy with respect to the subject does not state the opposite claim. 
We define a modal and many-valued semantics for our logic, and prove that the calculus is sound and complete with respect to this semantics. 
In order to show how Time-Stamped Claim Logic can be used in practice, we consider, as a proof of concept, a concrete cyber-attribution case study, inspired by the Ukraine Power Grid Attack that occurred in December 2015.

We proceed as follows. In Section~\ref{sec:language}, we define the language of Time-Stamped Claim Logic. In Section~\ref{sec:calculus}, we formalize a Gentzen calculus for the logic together with a running example. In Section~\ref{sec:semantics}, 
we introduce the semantics, and we then prove the soundness and the completeness of the calculus in Section~\ref{sec:soundness} and Section~\ref{sec:completeness}, respectively.
In Section~\ref{sec:case-study}, we apply Time-Stamped Claim Logic to a realistic case study taken from the cyber-security area. We discuss the most relevant related work in Section~\ref{sec:Comparison} and provide some concluding remarks and ideas for future work in Section~\ref{sec:concrems}.

\section{The Language of Time-Stamped Claim Logic}\label{sec:language}

In this section, we introduce the language of the \logic{}.
\begin{definition}
Let $P$ be a non-empty set of propositional symbols and assume fixed non-empty pairwise disjoint sets $X_A$ and $X_T$ of variables, which represent agents names and time points, respectively. 
The set $K_P$ of \emph{time-stamped propositional claims} is defined as
$$
K_P = \{{-}(t \cdot p),\ t \cdot p \mid t \in X_T \text{ and } p \in P\}
$$
\hfill $\blacksquare$
\end{definition}

\begin{definition}
The \emph{language} $L_P$ of \emph{assertions} is defined as follows:
\begin{itemize}
\item $K_P \subseteq  L_P$;
\item $t_1 \cong t_2 \in L_P$ whenever $t_1,t_2 \in X_T$;
\item $a_1 \lap_p a_2 \in L_P$ whenever $a_1,a_2 \in X_A$ and $p \in X_P$;
\item $\forall x. \, x \lap_p a$ whenever $a \in X_A$ and $p \in P$;
\item $a: \phi \in L_P$ whenever $a \in X_A$ and $\phi \in K_P$;
\item $a: \lnecb \phi \in L_P$ whenever $a \in X_A$ and $\phi \in K_P$;
\item $a :: (\phi_1,\, \dots, \, \phi_n / \phi)$ whenever $a \in X_A$ 
 and $\phi,\phi_1,\dots,\phi_n \in K_P$. \hfill $\blacksquare$
\end{itemize}
\end{definition}

Let us briefly discuss the intuitive meaning of these assertions:
\begin{itemize}
\item $t_1 \cong t_2 \in L_P$ establishes that $t_1$ and $t_2$ are equivalent time points.
\item $a_1 \lap_p a_2$ establishes that agent $a_2$ is more trustworthy with respect to $p$ than agent $a_1$.
\item $\forall x. \, x \lap_p a$ establishes that $a$ is the most trustworthy agent with respect to statements about $p$. 
\item $a: \phi$ establishes that $a$ states $\phi$.

\item $a: \lnecb t\cdot p$ establishes that there are no agents more trustworthy with respect to $p$ than $a$ that state ${-} (t\cdot p)$. 
That is, each agent more trustworthy with respect to $p$ than $a$ does not claim ${-} (t\cdot p)$. Similarly, for
$a: \lnecb {-} (t\cdot p)$.

\item $a ::(\phi_1,\, \dots, \, \phi_n / \phi)$ establishes that $a$ states $\phi$ conditional to statements $\phi_1,\dots, \phi_n$. 
\end{itemize}
Given the relation of $\phi$ with respect to the other statements we will refer to $a ::(\phi_1,\, \dots, \, \phi_n / \phi)$ as a \emph{derived evidence}. We will instead refer to the previous two assertions $a: \phi$ and $a: \lnecb \phi$ simply as \emph{evidence}.

Let us now illustrate how our logic is used by means of a running example. 

\begin{example}\em
\label{example:one}
Let us suppose that we want to model a crime situation C that possibly occurred at time $t$ in which there is a potential culprit J and some sources $a_1$, $a_2$ and $a_3$ providing statements on the case. 
Let us further suppose that for the source $a_1$ a person is a culprit provided that person was at the scene of the crime, is capable of committing the crime and has a motive. 
 This can be expressed by the assertion
$$
a_1::(\Jinplace,\Jcapable, \Jmotive\; /\; \Jculprit)
$$ 
over $P=\{\inplace,\capable, \motive,\culprit\}$.
Moreover, we can write the assertion
$$
\forall x. \, x \lapmot a_3
$$
to express that  $a_3$ is a source recognized to be the most trustworthy with respect to knowing the motive of the potential culprit $J$. 
 Furthermore, the assumptions
\begin{itemize}
\item $a_1$ is less trustworthy than $a_2$ with respect to knowing the motive of $J$ to have committed the crime $C$,
\item $a_1$ claims that $J$ does not have a motive to have committed the crime $C$ at time $t$,
\item $a_2$ claims that $J$ has a motive to have committed the crime $C$ at time $t$
\end{itemize}
can be expressed in the Time-Stamped Claim Logic by means of the assertions
\begin{itemize}
\item $a_1 \lapmot a_2$, 
\item $a_1:{-} (\Jmotive)$ and
\item $a_2:\Jmotive$,
\end{itemize}
respectively. Hence, if $a_3$ does not have an opinion about whether or not $J$ has a motive to have committed the crime $C$ at time $t$,
then 
\begin{itemize}
\item $a_3: \lnecb \, \Jmotive$ and $a_3: \lnecb  {-} (\Jmotive)$ should hold, 
\item $a_2:\lnecb \, \Jmotive$ should hold and $a_2: \lnecb  {-} (\Jmotive)$ should not hold, and 
\item both $a_1: \lnecb \, \Jmotive$ and $a_1: \lnecb  {-} (\Jmotive)$ should not hold.
\hfill $\blacksquare$
\end{itemize}
\end{example}

Before we introduce the calculus of our logic, which will allow us to draw conclusions from assertions like the ones in Example~\ref{example:one}, let us define some useful notation. In the following, we will write 
$$
\var_A: \wp L_P \to \wp X_A
$$ 
to denote the map that assigns to each set of assertions the set of elements of $X_A$ that occur in it; mutatis mutandis for $\var_T: \wp L_P \to \wp X_T$ and $\var_P: \wp L_P \to \wp P$. We may confuse a singleton set with its unique element. Given $P' \subseteq P$, $X'_A \subseteq X_A$ and $X'_T \subseteq X_T$, we write $L_{P'}^{X'_A,X'_T}$ to denote the subset of $L_P$ including all the assertions using only symbols in $P'$ and variables in $X'_A$ and $X'_T$.

\section{The Calculus of  Time-Stamped Claim Logic}
\label{sec:calculus}

Let us begin by recalling what is a sequent and an inference rule. Let $P$ be a non-empty set of propositional symbols. 
A \emph{sequent} over $P$ is a pair $(\Gamma,\Delta)$ and is denoted by
$$
\Gamma \to \Delta\,,
$$ 
where $\Gamma$ and $\Delta$ are finite multisets of formulas in $L_P$. An \emph{inference rule} over $P$ is of the form 
$$
\frac{\Gamma_1 \to \Delta_1 \quad \cdots \quad \Gamma_n \to \Delta_n}{\Gamma \to \Delta}\;  F
$$
where $F$ is a set of fresh variables with at most a variable in $X_A$. 

For convenience, below, we will sometimes write 
$$
{-}{-} (t \cdot p)
$$ 
to mean $t \cdot p$ and \emph{vice-versa}, in order to avoid the replication 
of rules $\rNB$, $\rNK$, $\rBCR$, $\rBCL$, $\rBKR$, and $\rBKL$.

\begin{definition}
The calculus over $P$ is composed by the following inference rules:

\begin{itemize}
\item \emph{Axiom} (\Ax): 
$$
\frac{}{\beta,\Gamma \to \Delta,\beta}
$$
where $\beta$ is of the form $a: \phi$,  $t_1 \cong t_2$ or $a_1 \lap_p a_2$
\item \emph{Cut} (\Cut):
$$
\frac{\beta, \Gamma \to \Delta \qquad \Gamma \to \Delta,\beta}{\Gamma \to \Delta}
$$
where $\beta$ is of the form $a: \phi$, $t_1 \cong t_2$ or $a_1 \lap_p a_2$

\item \emph{Reflexivity of $\cong$} (\ER):
$$
\frac{}{\Gamma \to \Delta, t \cong t}
$$

\item \emph{Symmetry of $\cong$} (\ES):
$$
\frac{}{t_1 \cong t_2,\Gamma \to \Delta,t_2 \cong t_1}
$$

\item \emph{Transitivity of $\cong$} (\ET):
$$
\frac{}{t_1 \cong t_2,t_2 \cong t_3,\Gamma \to \Delta, t_1 \cong t_3}
$$

\item \emph{Congruence of $\cong$} (\EC):
$$
\frac{}{t_1 \cong t_2,[a:\phi]^{t_1}_{t_2},\Gamma \to \Delta,a:\phi}
$$ 
where $[a:\phi]^{t_1}_{t_2}$ is a formula obtained from $a:\phi$ by replacing $t_1$ by $t_2$

\item \emph{Agent preference transitivity} (\rAT): 
$$
\frac{}{a_1 \lap_{p}a_2,a_2 \lap_{p} a_3,\Gamma \to \Delta,  a_1 \lap_{p} a_3}
$$

\item \emph{Agent preference reflexivity} (\rAI): 
$$
\frac{}{\Gamma \to \Delta,a \lap_{p}a}
$$

\item \emph{Agent preference congruence} (\rAC):
$$
\frac{}{a_1 \lap_p a_2,a_2 \lap_p a_1,[a:\phi]^{a_1}_{a_2},\Gamma \to \Delta,a:\phi}
$$
where $[a:\phi]^{a_1}_{a_2}$ is a formula obtained from $a:\phi$ by replacing $a_1$ by $a_2$

\item \emph{Agent preference maximum on the right} (\rAMR): 
$$
\frac{\Gamma \to \Delta, b \lap_p a}{\Gamma \to \Delta,\forall x. \, x \lap_p a}\; b
$$

\item \emph{Agent preference maximum on the left} (\rAML): 
$$
\frac{a' \lap_p a, \forall x. \, x \lap_p a,\Gamma \to \Delta}{\forall x. \, x \lap_p a, \Gamma \to \Delta}
$$

\item \emph{Negative statement} (\rNB): 
$$
\frac{a: {-}\phi,\Gamma \to \Delta, a: \phi}{a: {-} \phi,\Gamma \to \Delta}
$$

\item \emph{Negative knowledge} (\rNK): 
$$
\frac{{-} \phi,\Gamma \to \Delta, \phi}{{-}  \phi,\Gamma \to \Delta}
$$

\item \emph{Statement propagation over time} (\rBP): 
$$
\frac{\Gamma \to \Delta, t_1 \cong t_2, a:{-} (t_1 \cdot p), a: t_2 \cdot p}{\Gamma \to \Delta, t_1 \cong t_2, a:{-} (t_1 \cdot p)}
$$

\item \emph{Knowledge propagation time} (\rKP): 
$$
\frac{\Gamma \to \Delta, t_1 \cong t_2, {-}(t_1 \cdot p), t_2 \cdot p}{\Gamma \to \Delta, t_1 \cong t_2, {-}(t_1 \cdot p)}
$$

\item \emph{Statement confirmation on the right} (\rBCR): 
$$
\frac{b:{-}\phi, a \lap_{\var_P(\phi)} b, \Gamma \to \Delta}{\Gamma \to \Delta,a :\lnecb \phi} \; b
$$

\item \emph{Statement confirmation on the left} (\rBCL): 
$$
\frac{a :\lnecb \phi,\Gamma \to \Delta, a \lap_{\var_P(\phi)} a' \qquad a :\lnecb \phi,\Gamma \to \Delta, a':{-}\phi}{a :\lnecb \phi,\Gamma \to \Delta}
$$

\item \emph{Extracting knowledge from  statement on the right} (\rBKR): 
$$
\frac{b:{-} \phi, b:\lnecb {-}\phi, \Gamma \to \Delta \quad \Gamma \to \Delta, \phi, a:\phi \quad \Gamma \to \Delta, \phi, a:\lnecb  \phi}{\Gamma \to \Delta, \phi} \; b
$$

\item \emph{Extracting knowledge  on the left} (\rKL): 
$$
\frac{b:\phi, b:\lnecb \phi, \Gamma \to \Delta}{\phi,\Gamma \to \Delta} \; b
$$

\item \emph{Extracting knowledge from statement on the left} (\rBKL): 
$$
\frac{\phi,\Gamma \to \Delta,  a:{-} \phi \qquad  \phi,\Gamma \to \Delta, a:\lnecb {-}\phi}{\phi,\Gamma \to \Delta}
$$

\item \emph{Conditional reasoning on the right} (\CR1): 
$$
\frac{\phi_1,\dots,\phi_n, \Gamma \to \Delta, a : \phi}{\Gamma \to \Delta, a ::(\phi_1,\, \dots, \, \phi_n /  \phi)}
$$

\item \emph{Conditional reasoning on the left} (\CR2): 
$$
\frac{\Gamma \to \Delta, 
\phi_1\quad \cdots \quad  \Gamma \to \Delta, 
\phi_n \quad a : \phi, \Gamma \to \Delta}{a ::(\phi_1,\, \dots, \, \phi_n /  \phi),\Gamma \to \Delta}
$$
\end{itemize}
\end{definition}

Most of the rules are self-explanatory, so let us briefly explain only some of them.

The rule $(\rNB)$ states that the claims of an agent are not contradictory, in the sense that an agent cannot claim $\phi$ and $-\phi$. Observe that we do not have the right counterpart of $\rNB$ that introduces $-$ in the right hand side. This is because the logic is not bivalent since there is a third truth-value $\frac 12$. Therefore, $-$ is not a classical negation.

The rule $(\rBP)$ states that in order to conclude that an agent does not state a certain claim at a particular time it is enough to show that he states the opposite claim at a different time. 

The rule $(\rBCR)$ states that it is possible to conclude $a:\lnecb \phi$ when there is no agent more trustworthy than $a$ that claims $-\phi$. 

On the other hand, rule $(\rBKR)$ states that in order to conclude $\phi$, it is enough to show that there is an agent stating $\phi$ that is not contradicted by a more trustworthy agent and that each agent stating the opposite claim is contradicted by a more trustworthy agent.

We say that a sequent $\Gamma \to \Delta$ is a \emph{theorem}, written  
$$
\der \Gamma \to \Delta\,,
$$ 
if there is a finite sequence of sequents
$$
\Gamma_1 \to \Delta_1\cdots \Gamma_n \to \Delta_n
$$
such that:
\begin{itemize}
\item $\Gamma_1 \to \Delta_1$ is $\Gamma \to \Delta$,
\item for each $i$,  
\begin{itemize}
\item either $\Gamma_i \to \Delta_i$ is the conclusion of a rule without premises,
\item or $\Gamma_i \to \Delta_i$ is the conclusion of a rule where each premise  is a sequent $\Gamma_j\to \Delta_j$ in the sequence  with $j>i$.
\end{itemize}
\end{itemize}
In this case, the sequence is said to be a {\it derivation} for $\Gamma \to \Delta$ and $\Gamma \to \Delta$ is  said to be \emph{derivable}.

The notion of derivable can be brought to the realm of formulas. 
For $\Psi \cup \{\alpha\} \subseteq L_P$, we say that $\alpha$ is \emph{derivable} from $\Psi$, denoted by
$$
\Psi \der \alpha\,,
$$
whenever there is a finite set $\Gamma \subseteq \Psi$ such that
$\der \Gamma \to \alpha$.

For the sake of readability, in the derivations that we give below we underline the principal formula(s) of the rule/axiom that is applied.


\begin{example}\em
Let us return to the crime situation described in Example~\ref{example:one}. Let $P$ be the set composed by the following propositional symbols:
$\inplace$, $\capable$, $\motive$, $\culprit$, $\divergence$, $\poor$, $\friend$, $\guilty$ and $\similar.$
Let $\Gamma$ be the set containing the assertions
\begin{itemize}
\item $\psi_1=a_1::(\Jinplace,\Jcapable\; /\; \Jculprit)$ 
\item $\psi_2=a_2::(\Jmotive\; /\; \Jculprit)$ 
\item $\psi_3=a_3::({-} (\Jcapable),\Jpoor\; /\; {-} (\Jculprit))$ 
\item $\psi_4=a_4::(\Jinplace,\Jfriend\; /\; {-} (\Jculprit))$ 
\item $\psi_5=a_5::(\Jguilty,\Csimilar\; /\; \Jcapable)$ 
\item $\psi_6=a_6::(\Jguilty,{-}(\Csimilar)\; /\; {-}(\Jcapable))$ 
\item $\psi_7=a_7::(\Jdivergence\; /\; \Jmotive)$
\item $\psi_8=a_8: \Jdivergence$
\item $\delta_1=\forall x. \, x \lapmot a_7$
\item $\delta_2=\forall x. \, x \lapdiv a_8$
\end{itemize}
and let $\Delta$ be the singleton set with the assertion $a_2:\Jculprit$. The  derivation in Figure~\ref{fig:der1} establishes that $\Gamma \to \Delta$ is derivable. The subderivation of $(*)$ is in Figure~\ref{fig:der2}.
\begin{figure}[h]
\sder{
\hline
\\[-2mm]
1 \ $\psi_1,\underline{\psi_2},\dots,\psi_8,\delta_1,\delta_2 \to a_2:\Jculprit$ & \CR2:2,3 \\[1mm]
2 \ $\psi_1,\psi_3,\dots,\underline{\psi_7},\psi_8,\delta_1,\delta_2 \to a_2:\Jculprit,\Jmotive$ &  \CR2:4,5\\[1mm]
3 \ $\underline{a_2:\Jculprit}, \psi_1,\psi_3,\dots,\psi_8,\delta_1,\delta_2 \to \underline{a_2:\Jculprit}$ & \Ax\\[1mm]
4 \ $\psi_1,\psi_3,\dots,\psi_6,\psi_8,\delta_1,\delta_2 \to a_2:\Jculprit,\Jmotive,\underline{\Jdivergence}$ & \rBKR:6,7,8 \\[1mm]
5 \ $a_7:\Jmotive,\psi_1,\psi_3,\dots,\psi_6,\psi_8,\delta_1,\delta_2 \to a_2:\Jculprit,\underline{\Jmotive}$ & \rBKR (*) \\[1mm]
6 \ $b: {-}(\Jdivergence), \underline{b:\lnecb {-}(\Jdivergence)},\psi_1,\psi_3,\dots,\psi_6,\psi_8,\delta_1,\delta_2$ \\
\hspace*{10mm} $\to a_2:\Jculprit,\Jmotive$ &  \rBCL:12,13\\[1mm]
7 \ $\psi_1,\psi_3,\dots,\psi_6,\underline{\psi_8},\delta_1,\delta_2 \to a_2:\Jculprit,\Jmotive,\Jdivergence,$\\
\hspace*{10mm} $\underline{a_8:\Jdivergence}$ & \Ax  \\[1mm]
8 \ $\psi_1,\psi_3,\dots,\psi_6,\psi_8,\delta_1,\delta_2 \to a_2::\Jculprit,\Jmotive,\Jdivergence,$\\
\hspace*{10mm} $\underline{a_8:\lnecb\, \Jdivergence}$ & \rBCR:9 \\[1mm]
9 \ $\underline{b: {-}(\Jdivergence)},a_8:\lapdiv b,\psi_1,\psi_3,\dots,\psi_6,\psi_8,\delta_1,\delta_2$\\
\hspace*{10mm} $\to a_2:\Jculprit,\Jmotive,\Jdivergence$ & \rNB:10 \\[1mm]
10 \ $b: {-}(\Jdivergence),a_8:\lapdiv b,\psi_1,\psi_3,\dots,\psi_6,\psi_8,\delta_1,\underline{\delta_2}\to $\\
\hspace*{10mm} $a_2:\Jculprit,\Jmotive,\Jdivergence, b: \Jdivergence$ & \rAML:11 \\[1mm]
11 \ $\underline{b \lapdiv a_8},b: {-}(\Jdivergence),\underline{a_8:\lapdiv b},$\\
\hspace*{10mm} $\psi_1,\psi_3,\dots,\psi_6,\underline{\psi_8},\delta_1,\delta_2\to a_2:\Jculprit,\Jmotive,$\\
\hspace*{10mm} $\Jdivergence,\underline{b: \Jdivergence}$ & \rAC \\[1mm]
12 \ $b: {-}(\Jdivergence), b:\lnecb {-}(\Jdivergence),\psi_1,\psi_3,\dots,\psi_6,\psi_8,\delta_1,\underline{\delta_2}$\\
\hspace*{10mm} $ \to a_2:\Jculprit,\Jmotive,b\lapdiv a_8$ & \rAML:14 \\[1mm]
13 \ $b: {-}(\Jdivergence), b:\lnecb {-}(\Jdivergence),\psi_1,\psi_3,\dots,\psi_6,\underline{\psi_8},\delta_1,\delta_2$\\
\hspace*{10mm} $ \to a_2:\Jculprit,\Jmotive,\underline{a_8:\Jdivergence}$ &\Ax  \\[1mm]
14 \ $\underline{b\lapdiv a_8},b: {-}(\Jdivergence),b:\lnecb {-}(\Jdivergence),$\\
\hspace*{10mm} $\psi_1,\psi_3,\dots,\psi_6,\psi_8,\delta_1,\delta_2\to a_2:\Jculprit,\Jmotive,\underline{b\lapdiv a_8}$ & \Ax 
\\[4mm]\hline}
\caption{Derivation of $\psi_1,{\psi_2},\dots,\psi_8,\delta_1,\delta_2 \to a_2:\Jculprit$}\label{fig:der1}
\end{figure}
\noindent
\begin{figure}[h]
\sder{\hline
\\[-2mm]
1 \ $a_7:\Jmotive,\psi_1,\psi_3,\dots,\psi_6,\psi_8,\delta_1,\delta_2 \to a_2:\Jculprit,\underline{\Jmotive}$ & \rBKR:2,3,4 \\[2mm]
2 \ $b: {-}(\Jmotive), \underline{b:\lnecb {-}(\Jmotive)},a_7:\Jmotive,\psi_1,\psi_3,\dots,\psi_6,$ \\
\hspace*{10mm} $\psi_8,\delta_1,\delta_2 \to a_2:\Jculprit$ &  \rBCL:5,6\\[2mm]
3 \ $\underline{a_7:\Jmotive},\psi_1,\psi_3,\dots,\psi_6,\psi_8,\delta_1,\delta_2 \to a_2:\Jculprit,\Jmotive,$ \\
\hspace*{10mm} $\underline{a_7:\Jmotive}$ & \Ax \\[2mm]
4 \ $a_7:\Jmotive,\psi_1,\psi_3,\dots,\psi_6,\psi_8,\delta_1,\delta_2 \to a_2:\Jculprit,\Jmotive,$ \\
\hspace*{10mm} $\underline{a_7:\lnecb\, \Jmotive}$ & \rBCR:8 \\[2mm]
5 \ $b: {-}(\Jmotive), b:\lnecb {-}(\Jmotive),a_7:\Jmotive,\psi_1,\psi_3,\dots,\psi_6,$ \\
\hspace*{10mm} $\psi_8,\underline{\delta_1},\delta_2 \to a_2:\Jculprit,b\lapmot a_7$ & \rAML:7 \\[2mm]
6 \ $b: {-}(\Jmotive), b:\lnecb {-}(\Jmotive),\underline{a_7:\Jmotive},\psi_1,\psi_3,\dots,\psi_6,$ \\
\hspace*{10mm} $\psi_8,\delta_1,\delta_2 \to a_2:\Jculprit,\underline{a_7:\Jmotive}$ &\Ax  \\[2mm]
7 \ $\underline{b\lapmot a_7},b: {-}(\Jmotive), b:\lnecb {-}(\Jmotive),a_7:\Jmotive,$ \\
\hspace*{10mm} $\psi_1,\psi_3,\dots,\psi_6,\psi_8,\delta_1,\delta_2 \to a_2:\Jculprit,\underline{b\lapmot a_7}$ & \Ax \\[2mm]
8 \ $\underline{b:{-}(\Jmotive)},a_7\lapdiv b,a_7:\Jmotive,\psi_1,\psi_3,\dots,\psi_6,$ \\
\hspace*{10mm} $\psi_8,\delta_1,\delta_2 \to a_2:\Jculprit,\Jmotive$ & \rNB:9 \\[2mm]
9 \ $b:{-}(\Jmotive),a_7\lapdiv b,a_7:\Jmotive,\psi_1,\psi_3,\dots,\psi_6,$ \\
\hspace*{10mm} $\psi_8,\underline{\delta_1},\delta_2 \to a_2:\Jculprit,\Jmotive,b:\Jmotive$ & \rAML:10 \\[2mm]
10 \ $\underline{b\lapdiv a_7},b:{-}(\Jmotive),\underline{a_7\lapdiv b},\underline{a_7:\Jmotive},$ \\
\hspace*{10mm} $\psi_1,\psi_3,\dots,\psi_6,\psi_8,\underline{\delta_1},\delta_2 \to a_2:\Jculprit,\Jmotive,\underline{b:\Jmotive}$ & \rAC
\\[4mm]\hline}
\caption{Subderivation of $(*)$ in the derivation in Figure~\ref{fig:der1}}\label{fig:der2}
\end{figure}
\noindent
\hfill $\blacksquare$
\end{example}

We will show the use of Time-Stamped Claim Logic with a more complex case study in Section~\ref{sec:case-study}.

\section{The Semantics of  Time-Stamped Claim Logic}\label{sec:semantics}

In this section, we define the semantics of the Time-Stamped Claim Logic. In the following sections, we will then prove the soundness and completeness of the sequent calculus with respect to this semantics. We start by introducing the notion of interpretation structure, which has a modal and many-valued flavour. 

\begin{definition}
An \emph{interpretation structure} over a non-empty set $P$ of propositional symbols is a tuple
$$
(D_A,D_T,\cong^I,\{\lap_p^I\}_{p \in P},V)
$$
such that
\begin{itemize}
\item $D_A$ and $D_T$ are non-empty sets,
\item $\cong^I$ is a reflexive, symmetric and transitive relation over $D_T$,
\item $\lap_p^I$ is a transitive and reflexive binary relation over $D_A$ for each $p \in P$,
\item $V: P \times D_A  \times D_T \to \{0,1,\frac12\}$ such that for every $p \in P$, $d \in D_A$ and $n,n' \in D_T$,
\begin{itemize}
\item if $V(p,d,n)=1$, then $V(p,d,m)=0$ for every $m \in D_T$ such that $m \not \cong^I n$,
\item $V(p,d,n) = V(p,d,n')$ whenever $n \cong^I n'$,
\item $V(p,d,n) = V(p,d',n)$ whenever $d  \lap_p^I d'$ and $d'  \lap_p^I d$.
\end{itemize}
\end{itemize}
In an interpretation structure, $V(p,d,n)=1$ means that agent $d$ claims that $p$ occurred at time $n$, $V(p,d,n)=0$ means that agent $d$ claims that $p$ does not occur at time $n$ and $V(p,d,n)=\frac 12$ means that agent $d$ does not have an opinion about the occurrence of $p$ at time  $n$.

An \emph{assignment} $\rho$ over $I$ is a pair $(\rho_A,\rho_T)$ such that $\rho_A: X_A \to D_A$ and $\rho_T: X_T \to D_T$ are maps. Moreover, we say that  assignments $\rho$ and $\rho'$ are \emph{equivalent} up to $b \in X_A$, written 
$$
\rho \equiv^A_b \rho'\,,
$$ 
whenever $\rho'_T=\rho_T$ and $\rho'_A(a)=\rho_A(a)$ for every $a \in X_A \setminus \{b\}$. 

Given an interpretation structure $I$ over $P$ and an assignment $\rho$ over $I$, \emph{satisfaction of an assertion}  $\alpha$ by $I$ and $\rho$, denoted by
$$
I \rho \sat \alpha\,,
$$
is defined as follows
\begin{itemize}
\item $I \rho \sat t_1 \cong t_2$ whenever $\rho_T(t_1) {\cong}^I \rho_T(t_2)$,
\item $I \rho \sat a_1 \lap_{p} a_2$ whenever $\rho_A(a_1) \lap_p^I \, \rho_A(a_2)$,
\item $I \rho \sat \forall x. \, x \lap_p a$ whenever $d \lap_p^I \rho_A(a)$ for every $d \in D_A$,
\item $I \rho \sat a : \phi$ whenever $V(p,\rho_A(a),\rho_T(t))= 
\begin{cases}
1 & \text{if } \phi \text{ is } t \cdot p\\
0 & \text{if } \phi \text{ is } {-}(t \cdot p)\,,
\end{cases}
$
\item $I \rho \sat a:\lnecb \phi$ whenever, for each assignment $\rho'$ and $b \in X_A$ such that $\rho' \equiv^A_b \rho$, if $I \rho' \sat a \lap_{\var_P(\phi)} b$ then $I \rho' \not \sat  b: {-} \phi$,
\item $I \rho \sat \phi$ whenever 
\begin{itemize}
\item for each assignment $\rho'$ and $b \in X_A$ with $\rho' \equiv^A_b \rho$, if $I \rho' \sat b:{-} \phi$ then $I \rho' \not \sat b:\lnecb {-} \phi$,
\item there are assignment $\rho'$ and  $a \in X_A$ with $\rho' \equiv^A_a \rho$, $I \rho' \sat a : \phi$ and $I \rho' \sat a : \lnecb \phi$,
\end{itemize}
\item $I \rho \sat a ::(\phi_1,\, \dots, \, \phi_n / \phi)$ whenever if $I \rho \sat \phi_i$ for $i=1,\dots,n$ then $I \rho \sat a: \phi$.
\hfill $\blacksquare$
\end{itemize}
\end{definition}

Note that in the definition of satisfaction the truth-value $\frac 12$ does not play an explicit role. However, it allows the following to be possible:
$$
I \rho \not\sat a : \phi \quad \text{and} \quad I \rho \not \sat a : -\phi.
$$

To simplify the presentation, given a set $\Gamma \subseteq L_P$, we will write
$$
I \rho \sat \Gamma
$$ 
whenever $I \rho \sat \gamma$ for each $\gamma \in \Gamma$. 
\begin{definition}
Given $\Psi \subseteq L_P$ and $\alpha \in L_P$, we say that $\Psi$ \emph{entails} $\alpha$, denoted by
$$
\Psi \ent \alpha\,,
$$
whenever for every interpretation structure $I$ and assignment $\rho$ over $I$ we have 
$$
I \rho \sat \alpha \quad \text{ if }  \quad I \rho \sat \Psi\,.
$$
\hfill $\blacksquare$
\end{definition}

\section{Soundness}\label{sec:soundness}

In this section, we prove that the theorems derived using the calculus of the Time-Stamped Claim Logic introduced in Section~\ref{sec:calculus} are valid according to the semantics introduced in Section~\ref{sec:semantics}. 
We start by providing some relevant semantic notions related to sequents. 

\begin{definition}
We say that an interpretation structure $I$ and an assignment $\rho$ \emph{satisfy a sequent} $\Gamma \to \Delta$, denoted by
$$
I \rho \sat \Gamma \to \Delta\,,
$$ 
whenever $I\rho \sat \delta$ for some $\delta \in \Delta$ if $I\rho \sat \Gamma$. We say that a sequent $\Gamma \to \Delta$ is \emph{valid}, denoted by
$$
\ent \Gamma \to \Delta\,,
$$ 
whenever it is satisfied by every interpretation structure $I$ and assignment $\rho$. 
We say that $I$ and $\rho$ \emph{satisfy a rule with a fresh variable} $b$, 
$$
\frac{\Gamma_1 \to \Delta_1 \quad \cdots \quad \Gamma_n \to \Delta_n}{\Gamma \to \Delta}\;  b \,,
$$
whenever $I \rho \sat \Gamma \to \Delta$ if $I \rho' \sat \Gamma_i \to \Delta_i$ for $i=1,\dots,n$
for every $\rho'$ such that $\rho' \equiv^A_b \rho$. We say that $I$ and $\rho$ \emph{satisfy a rule without fresh variables},
$$
\frac{\Gamma_1 \to \Delta_1 \quad \cdots \quad \Gamma_n \to \Delta_n}{\Gamma \to \Delta}\,,
$$
whenever $I \rho \sat \Gamma \to \Delta$ if $I \rho \sat \Gamma_i \to \Delta_i$ for $i=1,\dots,n$. 
Finally, we say that a rule is \emph{sound} whenever it is satisfied by every interpretation structure and assignment. 
\hfill $\blacksquare$
\end{definition}

The proof of the following result is in Appendix~A.

\begin{prop}\label{prop:rulessound}
The rules of the sequent calculus are sound.
\end{prop}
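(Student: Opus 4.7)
The plan is to proceed by case analysis on the inference rules, showing that each is satisfied by every interpretation structure $I$ and every assignment $\rho$ over $I$. I group the cases by the flavour of argument required, leaving the most intricate case for last.

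First, the axiom-schematic rules $\Ax$, $\ER$, $\ES$, $\ET$, $\EC$, $\rAT$, $\rAI$, $\rAC$ and the $\Cut$ rule reduce directly to the defining clauses of an interpretation structure: reflexivity, symmetry and transitivity of $\cong^I$ and $\lap_p^I$, together with the congruence stipulations on $V$ requiring $V(p,d,n)=V(p,d,n')$ whenever $n\cong^I n'$, and $V(p,d,n)=V(p,d',n)$ whenever $d$ and $d'$ are mutually trustworthy. The $\Cut$ rule is immediate because satisfaction is two-valued for the $\beta$ it is restricted to.

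Second, the rules $\rNB$ and $\rNK$ exploit that $V$ assigns a single element of $\{0,1,\frac12\}$ to each triple, so the same agent cannot simultaneously satisfy $a:\phi$ and $a:{-}\phi$. The rules $\rBP$ and $\rKP$ use the time-propagation clause that $V(p,d,n)=1$ forces $V(p,d,m)=0$ whenever $m\not\cong^I n$. The rules $\rBCL$, $\rBKL$, $\rKL$, $\rAML$, $\CR1$ and $\CR2$ are direct unfoldings of the corresponding satisfaction clauses for $\lnecb$, $\forall$ and $::$.

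The genuinely delicate cases are the three rules with a fresh agent variable $b$: $\rAMR$, $\rBCR$ and $\rBKR$. For $\rAMR$, letting $\rho'_A(b)$ range over $D_A$ while keeping $\rho'\equiv^A_b\rho$ fixed elsewhere, the premise yields $d\lap_p^I\rho_A(a)$ for every $d\in D_A$, which is precisely $I\rho\sat\forall x.\, x\lap_p a$. For $\rBCR$, contrapositive instantiation of the premise at every $\rho'\equiv^A_b\rho$ produces exactly the defining clause for $I\rho\sat a:\lnecb\phi$, while freshness of $b$ preserves $I\rho'\sat\Gamma$ and $I\rho'\not\sat\Delta$ as we vary $\rho'_A(b)$.

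The hardest case, and the main obstacle I anticipate, is $\rBKR$. Assuming $I\rho\sat\Gamma$ and that no $\delta\in\Delta$ is satisfied at any $\rho'\equiv^A_b\rho$ (again by freshness), I must derive $I\rho\sat\phi$. Instantiating premises~2 and~3 at $\rho'=\rho$ gives $I\rho\sat a:\phi$ and $I\rho\sat a:\lnecb\phi$, supplying the existential clause of $I\rho\sat\phi$ with $\rho'=\rho$ and the same $a$. For the universal clause, given any $b'\in X_A$ and $\rho''\equiv^A_{b'}\rho$ with $I\rho''\sat b':{-}\phi$, I set $\rho^*\equiv^A_b\rho$ with $\rho^*_A(b):=\rho''_A(b')$ and apply premise~1 at $\rho^*$, which forces $I\rho^*\not\sat b:\lnecb{-}\phi$. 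The delicate step is a renaming lemma: since $b$ is fresh in $\phi$ and in the outer context, one has $I\rho^*\sat b:\lnecb{-}\phi \Leftrightarrow I\rho''\sat b':\lnecb{-}\phi$, as well as the analogous equivalence for $b:{-}\phi$ against $b':{-}\phi$. Proving this requires inspecting the inner quantification in the definition of $\lnecb$ and exhibiting a bijection between the witnessing assignments on either side that preserves the relevant $V$-values and $\lap_p^I$-relations. Once this renaming is in place, the contradiction with premise~1 closes the argument, and the soundness of all remaining rules follows by the easier arguments outlined above.
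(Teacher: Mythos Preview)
Your proposal is correct and follows essentially the same case-by-case strategy as the paper, with two minor divergences worth noting.

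First, you underestimate $\rKP$. It is not a simple analogue of $\rBP$: one must show that $I\rho\sat t_2\cdot p$ together with $\rho_T(t_1)\not\cong^I\rho_T(t_2)$ forces $I\rho\sat{-}(t_1\cdot p)$, and both the universal and the existential clauses in the satisfaction of ${-}(t_1\cdot p)$ require chasing witnesses through the $\lnecb$ definition, not just the time-propagation clause on $V$. The paper spends a full paragraph on this; your sketch would need to be expanded accordingly.

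Second, for $\rBKR$ you work harder than the paper does. The paper simply lets $\sigma\equiv^A_b\rho$ with $I\sigma\sat b{:}{-}\phi$, using the \emph{same} fresh $b$ from the rule as the bound variable in the universal clause of $I\rho\sat\phi$, and applies premise~1 directly at $\sigma$ to obtain $I\sigma\not\sat b{:}\lnecb{-}\phi$. No explicit renaming lemma is invoked; the justification is that the truth of $b'{:}{-}\phi$ and $b'{:}\lnecb{-}\phi$ under $\rho'\equiv^A_{b'}\rho$ depends only on the element $\rho'_A(b')\in D_A$, and varying $\sigma_A(b)$ over $D_A$ already exhausts all such elements. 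Your renaming argument makes this dependence explicit, which is more rigorous but not strictly needed for the paper's level of detail. Also, your reason for $\rNK$ (``$V$ is single-valued'') is not quite the mechanism: what actually rules out $I\rho\sat\phi$ and $I\rho\sat{-}\phi$ simultaneously is that the existential witness for ${-}\phi$ directly violates the universal clause for $\phi$.
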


The soundness of the rules can be used to show the soundness of deductive consequences.

\begin{prop}\label{prop:soundnessder}
Let $\Gamma \to \Delta$ be a sequent over $P$. Then $\ent \Gamma \to \Delta$ whenever $\der \Gamma \to \Delta$.
\end{prop}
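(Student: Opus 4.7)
The plan is to establish soundness by a straightforward induction on the derivation, relying entirely on Proposition~\ref{prop:rulessound} which already guarantees that every individual inference rule of the calculus is sound. Since a derivation is a finite sequence $\Gamma_1 \to \Delta_1, \dots, \Gamma_n \to \Delta_n$ in which the premises of each inference are placed at indices strictly greater than the conclusion, the natural device is reverse (strong) induction on $i$: I would prove that $\ent \Gamma_i \to \Delta_i$ for every $i \in \{1, \dots, n\}$, and then read off the desired conclusion $\ent \Gamma \to \Delta$ from the case $i = 1$, since $\Gamma_1 \to \Delta_1$ is $\Gamma \to \Delta$ by definition of derivation.

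For the base case $i = n$, no index $j > n$ can occur, so $\Gamma_n \to \Delta_n$ must be the conclusion of a rule without premises (an axiom or one of the closed ground rules such as $\ER$, $\ES$, $\ET$, $\EC$, $\rAT$, $\rAI$, $\rAC$), and its validity follows immediately from Proposition~\ref{prop:rulessound}. In the inductive step, assume $\ent \Gamma_j \to \Delta_j$ for every $j > i$. The sequent $\Gamma_i \to \Delta_i$ is either obtained from a premiseless rule, which is handled as in the base case, or from a rule whose premises $\Gamma_{j_1} \to \Delta_{j_1}, \dots, \Gamma_{j_k} \to \Delta_{j_k}$ sit at higher indices and are therefore valid by the induction hypothesis. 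Fixing an arbitrary interpretation structure $I$ and assignment $\rho$, Proposition~\ref{prop:rulessound} applied to the rule in question yields $I\rho \sat \Gamma_i \to \Delta_i$, which gives $\ent \Gamma_i \to \Delta_i$ since $I$ and $\rho$ were arbitrary.

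The only delicate point to address concerns the mismatch between the global notion of \emph{validity} (satisfaction under every structure and every assignment) and the per-assignment notion of \emph{rule satisfaction} used in Proposition~\ref{prop:rulessound}, which for rules with a fresh variable such as $\rAMR$, $\rBCR$, $\rBKR$, or $\rKL$ requires the premises to hold under every $\rho'$ with $\rho' \equiv^A_b \rho$. I would note that this is not actually an obstacle: the induction hypothesis delivers something strictly stronger, namely satisfaction of each premise under \emph{all} assignments, so in particular under every $\rho'$ equivalent to $\rho$ up to $b$, and the freshness condition on $b$ is irrelevant at this level of the argument. Beyond flagging this absorption of the local quantifier, the proof is a routine unfolding of the definitions of $\der$, of derivation, and of $\ent$, combined with the already-verified rule-by-rule soundness.
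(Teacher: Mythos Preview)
Your proposal is correct and takes essentially the same approach as the paper: both arguments are a straightforward induction appealing to Proposition~\ref{prop:rulessound}, and both handle the fresh-variable side condition by observing that validity of the premises yields satisfaction under \emph{all} assignments, in particular under every $\rho' \equiv^A_b \rho$. The only cosmetic difference is that the paper organizes the induction on the length of the derivation (so that each premise has its own strictly shorter derivation), whereas you fix the derivation and perform reverse strong induction on the index $i$; these are equivalent packagings of the same argument.
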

\begin{proof}
The proof is by induction on the length of a derivation $\Gamma_1 \to \Delta_1 \cdots \Gamma_n \to \Delta_n$ of $\Gamma \to \Delta$. The base case follows straightforwardly by Proposition~\ref{prop:rulessound}.

For the step case, assume that $\Gamma_1 \to \Delta_1$ follows from rule $u$ of the calculus applied to sequents 
$\Gamma_{i_j} \to \Delta_{i_j}$ for $j=1,\dots, k$. Then
 $\der \Gamma_{i_j} \to \Delta_{i_j}$ for $j=1,\dots, k$. Hence, by the induction hypothesis, $\ent \Gamma_{i_j} \to \Delta_{i_j}$ for $j=1,\dots, k$. 
 Let $I$ be an interpretation structure, $\rho$ an assignment and $\rho'$ an assignment equivalent modulo the fresh variables of $u$ to $\rho$. Then $I \rho' \sat \Gamma_{i_j} \to \Delta_{i_j}$ for $j=1,\dots, k$ since $\ent \Gamma_{i_j} \to \Delta_{i_j}$ for $j=1,\dots, k$. Taking into account that $u$ is sound, by Proposition~\ref{prop:rulessound}, we conclude that 
 $I \rho \sat \Gamma_1 \to \Delta_1$.
\end{proof}

Capitalizing on the soundness at the sequent level, we can establish soundness at the level of assertions.

\begin{prop}
Let $\Psi \subseteq L_P$ and $\alpha \in L_P$. Then $\Psi \ent \alpha$ whenever $\Psi \der \alpha$.
\end{prop}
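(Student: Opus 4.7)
The plan is to reduce the assertion-level statement to the sequent-level soundness that was just established. By unfolding the definition of $\Psi \der \alpha$, I obtain a finite subset $\Gamma \subseteq \Psi$ with $\der \Gamma \to \alpha$, where the right-hand side is the singleton multiset $\{\alpha\}$. From here, the previous proposition immediately gives $\ent \Gamma \to \alpha$, and I only need to transport validity of this sequent into the entailment statement $\Psi \ent \alpha$.

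Concretely, I would proceed as follows. First, fix an arbitrary interpretation structure $I$ over $P$ and an assignment $\rho$ over $I$ with $I \rho \sat \Psi$. Since $\Gamma \subseteq \Psi$, the semantic clause $I \rho \sat \Gamma$ holds (that is, $I \rho \sat \gamma$ for each $\gamma \in \Gamma$). Now invoke $\ent \Gamma \to \alpha$: the sequent is satisfied by $I$ and $\rho$, which by definition of sequent satisfaction means that if $I \rho \sat \Gamma$ then $I \rho \sat \delta$ for some $\delta$ in the right-hand multiset. Since the only element in the right-hand side is $\alpha$, we obtain $I \rho \sat \alpha$. As $I$ and $\rho$ were arbitrary, this is exactly $\Psi \ent \alpha$.

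I do not anticipate any genuine obstacle in this proof: it is essentially a bookkeeping lemma that bridges the sequent layer and the formula layer. The only mild subtlety is making sure to apply the definitions in the right order, in particular noting that $\der \Gamma \to \alpha$ produces a sequent whose succedent is the one-element multiset $\alpha$, so that the disjunctive reading of the right-hand side in the definition of sequent satisfaction collapses to $I \rho \sat \alpha$. All the substantive work has already been carried out in Proposition~\ref{prop:soundnessder} (which handles the induction on derivations) and Proposition~\ref{prop:rulessound} (which handles the individual rules), so this final statement is a one-paragraph corollary.
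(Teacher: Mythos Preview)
Your proposal is correct and follows essentially the same approach as the paper's own proof: unfold $\Psi \der \alpha$ to obtain a finite $\Gamma \subseteq \Psi$ with $\der \Gamma \to \alpha$, apply Proposition~\ref{prop:soundnessder} to get $\ent \Gamma \to \alpha$, and then check that any $I,\rho$ satisfying $\Psi$ also satisfies $\Gamma$ and hence $\alpha$. The only difference is that you spell out explicitly why the singleton succedent forces $I\rho \sat \alpha$, which the paper leaves implicit.
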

\begin{proof}
Assume that $\Psi \der \alpha$. Then there is a finite set $\Gamma \subseteq \Psi$ such that
$\der \Gamma \to \alpha$. Thus, by Proposition~\ref{prop:soundnessder}, $\ent \Gamma \to \alpha$. Let $I$ be an interpretation structure and $\rho$ an assignment such that $I \rho \sat \Psi$. Therefore,  $I \rho \sat \Gamma$
and so $I \rho \sat \alpha$.
\end{proof}

\section{Completeness}\label{sec:completeness}

In this section, we show that the Time-Stamped Claim Logic is complete using a Hintikka style technique. We consider Hintikka pairs instead of just Hintikka sets since the logic does not have  a classical negation over all assertions. 

\begin{definition}
A \emph{Hintikka pair over} $P' \subseteq P$, $X'_A \subseteq X_A$ and $X'_T \subseteq X_T$  is a pair $(\Psi_L,\Psi_R)$ of sets of assertions in $L_{P'}^{X'_A,X'_T}$ such that:
\begin{itemize}
\item either $\beta \notin \Psi_L$ or $\beta \notin \Psi_R$ for each  $\beta \in L_{P'}^{X'_A,X'_T}$ of the form $a: \phi$, $t_1 \cong t_2$ and  $a_1 \lap_p a_2$,
\item either $\beta \in \Psi_L$ or $\beta \in \Psi_R$ for each $\beta \in L_{P'}^{X'_A,X'_T}$  of the form $a: \phi$, $t_1 \cong t_2$ and $a_1 \lap_p a_2$,
\item $t \cong t \notin \Psi_R$ and similarly for $\lap_p$,
\item if $t_1 \cong t_2 \in \Psi_L$ then $t_2 \cong t_1 \notin \Psi_R$,
\item if $t_1 \cong t_2, t_2 \cong t_3 \in \Psi_L$ then $t_1 \cong t_3 \notin \Psi_R$ and similarly for  $\lap_p$,
\item if $t_1 \cong t_2,[a:\phi]^{t_1}_{t_2} \in \Psi_L$ then $a:\phi \notin \Psi_R$,
\item if $a_1 \lap_p a_2,a_2 \lap_p a_1, [a:\phi]^{a_1}_{a_2} \in \Psi_L$ then $a:\phi \notin \Psi_R$,
\item if $\forall x. \, x \lap_p a \in \Psi_R$ then there is  $a' \in X'_A$ such that $a'\lap_p a \in \Psi_R$,
\item if $\forall x. \, x \lap_p a \in \Psi_L$ then $a_1 \lap_p a \in \Psi_L$ for each $a_1 \in X'_A$,
\item if $a: \phi \in \Psi_L$ then $a: {-} \phi \in \Psi_R$,
\item if $t_1 \cong t_2, a: {-} (t_1 \cdot p) \in \Psi_R$ then $a:t_2 \cdot p \in \Psi_R$,
\item if $a: \lnecb \phi \in \Psi_R$ then there is $a' \in X'_A$ such that $a \lap_{\var(\phi)} a', a': {-} \phi \in \Psi_L$,
\item if $a: \lnecb \phi \in \Psi_L$ then, for each $a' \in X'_A$ either $a \lap_{\var(\phi)} a' \in \Psi_R$ or $a': {-} \phi \in \Psi_R$,  
\item if $\phi \in \Psi_R$ then  either there exists $a' \in X'_A$ such that $a':{-} \phi, a':\lnecb {-}\phi \in \Psi_L$ or for each $a \in X'_A$  either $a:\phi \in \Psi_R$ or $a:\lnecb \phi  \in \Psi_R$,
\item if $\phi \in \Psi_L$ then there is $a' \in X'_A$ such that $a':\phi, a':\lnecb \phi \in \Psi_L$ and for each $a \in X'_A$ either $a:{-} \phi \in \Psi_R$ or $a:\lnecb{-} \phi \in \Psi_R$,
\item if $a::(\phi_1,\dots,\phi_n / \phi) \in \Psi_R$ then if $\phi_1,\dots,\phi_n \in \Psi_L$ then	$a:\phi \in \Psi_R$,
\item if $a::(\phi_1,\dots,\phi_n / \phi) \in \Psi_L$ then either $\phi_i\in \Psi_R$ for some $i=1,\dots,n$ or $a:\phi \in \Psi_L$.
\hfill $\blacksquare$
\end{itemize}
\end{definition}

We now show that for each Hintikka pair there are an interpretation structure and an assignment satisfying all the assertions in the first component of the pair and not satisfying each assertion in the second component. 

The proof of the next result can be found in Appendix~A.

\begin{prop}\label{prop:HKsat}
Let $H=(\Psi_L,\Psi_R)$ be a Hintikka pair over $P'$, $X'_A$ and $X'_T$. Then there are an interpretation structure $I_H$  over $P'$ and an assignment $\rho$ over $I_H$ such that $I_H \rho \sat \psi$ for each $\psi \in \Psi_L$ and $I_H \rho \not \sat \psi$
for each $\psi \in \Psi_R$. 
\end{prop}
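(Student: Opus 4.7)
The approach is a canonical model construction: build the interpretation structure $I_H$ out of the syntactic data of the Hintikka pair, taking agent and time variables as their own denotations. Concretely, I would set $D_A = X'_A$, $D_T = X'_T$, define $t_1 \cong^I t_2$ iff $t_1 \cong t_2 \in \Psi_L$, define $a_1 \lap_p^I a_2$ iff $a_1 \lap_p a_2 \in \Psi_L$, and let $\rho_A$, $\rho_T$ be the respective identity maps. The Hintikka conditions coming from \ER, \ES, \ET{} (together with \rAI{} and \rAT) immediately give that $\cong^I$ is an equivalence relation and each $\lap_p^I$ is a preorder.

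The valuation is the delicate point. I would define $V(p,a,t) = 1$ if $a : t' \cdot p \in \Psi_L$ for some $t'$ with $t \cong^I t'$, set $V(p,a,t) = 0$ if either $a : {-}(t \cdot p) \in \Psi_L$ or there is $t'$ with $a : t' \cdot p \in \Psi_L$ and $t \not\cong^I t'$, and set $V(p,a,t) = \frac12$ otherwise. Three semantic constraints on $V$ then need to be checked: the time-propagation constraint ``$V(p,a,n){=}1$ implies $V(p,a,m){=}0$ for $m \not\cong^I n$'' is built into the split between the $=1$ and $=0$ clauses; invariance under $\cong^I$ on times is automatic; invariance under the symmetric part of $\lap_p^I$ uses the Hintikka consequence of \rAC. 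The key consistency check, that the $=1$ and $=0$ clauses cannot both fire for the same $(p,a,t)$, uses the Hintikka condition derived from \rBP: were $a : t \cdot p \in \Psi_L$ and $a : t' \cdot p \in \Psi_L$ compatible with $t \not\cong^I t'$, one would be forced to place $a : {-}(t \cdot p)$ in $\Psi_L$, violating the disjointness of $\Psi_L$ and $\Psi_R$.

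The main work is then a verification, by case analysis on the form of $\psi$, that $I_H \rho \sat \psi$ for $\psi \in \Psi_L$ and $I_H \rho \not\sat \psi$ for $\psi \in \Psi_R$. The atomic cases $t_1 \cong t_2$, $a_1 \lap_p a_2$, $a : \phi$ follow from the definitions plus the Hintikka dichotomy. The case $\forall x.\, x \lap_p a$ is immediate from the Hintikka conditions coming from \rAMR{} and \rAML, since $D_A$ is precisely $X'_A$. For $a : \lnecb \phi$, the semantics quantifies over assignments $\rho' \equiv^A_b \rho$, which in the canonical model amounts to varying $\rho'(b)$ over $X'_A$; the two Hintikka conditions governing $a : \lnecb \phi$, read off from \rBCL{} and \rBCR, supply exactly the absence (resp.\ presence) of the witness $a'$ demanded by the semantic clause. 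The compound cases $\phi$ (bare evidence) and $a :: (\phi_1, \dots, \phi_n / \phi)$ are handled similarly using the Hintikka conditions derived from \rBKL, \rBKR, \rKL{} and the two conditional reasoning rules.

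The hardest part is the clause for a bare claim $\phi$, whose semantics interleaves two quantifications over assignments: one must supply a pair $(\rho', a)$ with $I_H \rho' \sat a : \phi$ and $I_H \rho' \sat a : \lnecb \phi$, while simultaneously ruling out any $(\rho', b)$ with both $b : {-}\phi$ and $b : \lnecb{-}\phi$ satisfied. Matching these demands against the two-part Hintikka conditions for $\phi$ requires careful bookkeeping: the witness agent promised by the Hintikka pair when $\phi \in \Psi_L$ must be shown to fulfil the semantic witness requirement under the canonical $\rho$, and the forbidden configuration when $\phi \in \Psi_R$ must be excluded using the other half of the same condition. Because the satisfaction clause for $\phi$ appeals to satisfaction of assertions of the form $a : \lnecb \phi$, which are themselves non-atomic, the induction has to be laid out on a measure that strictly decreases on $\lnecb$-depth rather than on raw syntactic complexity; since $\lnecb$ does not recur inside $\phi$, this well-foundedness is easy to justify but must be set up explicitly.
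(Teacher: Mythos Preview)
Your proposal is correct and follows essentially the same canonical-model route as the paper: domains $X'_A$, $X'_T$, relations read off from $\Psi_L$, identity assignment, and a case analysis on the shape of $\psi$ using the Hintikka closure conditions. The only real difference is that the paper uses the simpler valuation $V(p,a,t)=1$ iff $a:t\cdot p\in\Psi_L$, $V(p,a,t)=0$ iff $a:{-}(t\cdot p)\in\Psi_L$, and $\tfrac12$ otherwise; the extra $\cong$-saturation you build into $V$ is redundant, since the Hintikka conditions from \EC{} and \rBP{} already force $a:t\cdot p\in\Psi_L$ whenever your clauses would fire, so the two definitions coincide extensionally. Your remark about organising the induction by $\lnecb$-depth is a welcome clarification that the paper leaves implicit.
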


The main objective now is to show that if a sequent is not derivable then there is an interpretation structure that falsifies it. For this purpose, we must introduce the concept of (deductive) expansion of a sequent.
\begin{definition}
An \emph{expansion} of a sequent $\Gamma \to \Delta$ is a  (finite or infinite) sequence of sequents $\Gamma_1 \to \Delta_1 \ \cdots$ such that:
\begin{itemize}
\item $\Gamma_1 \to \Delta_1$ is $\Gamma \to \Delta$;
\item for each $i\neq 1$, $\Gamma_i \to \Delta_i$ is a premise of a rule with conclusion $\Gamma_j\to \Delta_j$ with $j <i$ and all the other premises of the rule occur in the sequence. 
\end{itemize}
\hfill $\blacksquare$
\end{definition}

Observe that every derivation is an expansion but not vice versa. From an expansion fulfilling conditions described below, we will define an interpretation structure that falsifies the first sequent in the expansion. 
We need first to introduce the concept of branch.
\begin{definition}
A \emph{branch} of the expansion $\Gamma_1 \to \Delta_1 \  \cdots$ starting at sequent $\Gamma_i \to \Delta_i$ is a subsequence 
$\Gamma_{i_1} \to \Delta_{i_1} \ \cdots$ of the expansion such that:
\begin{itemize}
\item $\Gamma_{i_1} \to \Delta_{i_1}$ is $\Gamma_i \to \Delta_i$,
\item for each $j >1$, $\Gamma_{i_j} \to \Delta_{i_j}$ is a premise of the rule used in the expansion to justify $\Gamma_{i_{j-1}} \to \Delta_{i_{j-1}}$ (the conclusion of the rule),
\item for each $j \geq 1$ if $\Gamma_{i_j} \to \Delta_{i_j}$ is the conclusion of a rule in the expansion then there $k >j$ such $\Gamma_{i_k} \to \Delta_{i_k}$ is a premise of the rule in the expansion.
\end{itemize}
For simplicity, we present a branch as a sequence  $\Gamma'_1 \to \Delta'_1 \ \Gamma'_2 \to \Delta'_2 \ \cdots$ instead of using  the numbering of each sequent of the branch in the expansion. When $\Gamma'_1 \to \Delta'_1$ is $\Gamma_1 \to \Delta_1$ we say that the branch is \emph{rooted}. 
\hfill $\blacksquare$
\end{definition}

\begin{definition}
We say that a finite branch $\Gamma'_1 \to \Delta'_1 \cdots\Gamma'_n \to \Delta'_n$ of an expansion $\Gamma_1 \to \Delta_1 \ \cdots$ is \emph{analytical} whenever it is possible to inductively define the families
of sets
$$
\{(i_L)_j\}_{j=1,\dots n} \quad \text{and} \quad \{(i_R)_j\}_{j=1,\dots,n}
$$  
such that 
$$
(i_L)_1=\Gamma'_1 \quad \text{and} \quad (i_R)_1=\Delta'_1
$$ 
and for $j=2,\dots,n$,  
$$
(i_L)_j=(i_L)_{j-1} \cup (\Gamma'_j \setminus \Gamma'_{j-1}) \quad \text{and} \quad 
(i_R)_j=(i_R)_{j-1} \cup (\Delta'_j \setminus \Delta'_{j-1})
$$
and  if  $\Gamma'_j \to \Delta'_j$ is a premise of a rule $u$,
then
\begin{enumerate}
\item either there is a side formula in the left hand side of a premise of $u$ in the expansion which does not occur in $(i_L)_{j-1}$ not involving fresh variables or there is a side formula in the right hand side of a premise of $u$ in the expansion which does not occur in  $(i_R)_{j-1}$ not involving fresh variables; 
\item the non fresh variables of $u$ in $\var_A((\Gamma'_j \setminus \Gamma'_{j-1}) \cup (\Delta'_j \setminus \Delta'_{j-1}))$ are in  $\var_A((i_L)_{j-1} \cup (i_R)_{j-1})$ and similarly for  $\var_T$ and $\var_P$; 
\item the fresh variables of $u$ in $\var_A((\Gamma'_j \setminus \Gamma'_{j-1}) \cup (\Delta'_j \setminus \Delta'_{j-1}))$ are not in $\var_A(\Gamma'_{j-1} \cup \Delta'_{j-1})$ and similarly for  $\var_T$ and $\var_P$. 
\end{enumerate}
An analytical branch $\Gamma'_1 \to \Delta'_1 \cdots\Gamma'_n \to \Delta'_n$ is \emph{exhausted} whenever each branch that results from applying a rule to  $\Gamma'_n \to \Delta'_n$ is not analytical in the augmented expansion. An analytical exhausted branch is \emph{closed} whenever the last sequent is the conclusion of a rule without premises. Otherwise, it is said to be \emph{open}. An expansion is
\emph{open} when it has an open analytical exhausted branch. 
\hfill $\blacksquare$
\end{definition}

The proof of the next result is in Appendix~A.

\begin{prop}\label{prop:Hintikkafrombranch} 
Let $\Gamma'_1 \to \Delta'_1 \cdots\Gamma'_n \to \Delta'_n$ be a rooted  open analytical exhausted branch of 
an expansion $\Gamma_1 \to \Delta_1 \ \cdots$. Then $$\left(\displaystyle \bigcup_{i=1}^n \Gamma'_i,\displaystyle \bigcup_{i=1}^n \Delta'_i\right)$$ is an Hintikka pair over $\var_P(\Gamma'_n \cup \Delta'_n)$, $\var_A(\Gamma'_n \cup \Delta'_n)$ and $\var_T(\Gamma'_n \cup \Delta'_n)$.
\end{prop}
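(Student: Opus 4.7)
Set $\Psi_L = \bigcup_{i=1}^n \Gamma'_i$ and $\Psi_R = \bigcup_{i=1}^n \Delta'_i$, both contained in $L_{P'}^{X'_A,X'_T}$ with $P' = \var_P(\Gamma'_n \cup \Delta'_n)$ and analogously for $X'_A,X'_T$ (the variable sets can only grow along the branch, so this equals $\bigcup \var_P(\Gamma'_i \cup \Delta'_i)$, etc.). The plan is to verify each clause of the Hintikka-pair definition by a uniform contradiction argument: if some clause failed, we could extend the branch by one more rule application in such a way that the extension is still analytical, contradicting exhaustion; and if the extension would be the conclusion of a no-premise rule, the branch would become closed, contradicting openness. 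In each case the target clause is matched with the rule it corresponds to.

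The closure-type clauses (those asserting that certain formulas are \emph{not} in $\Psi_R$, or not in $\Psi_L$, or not both) are all handled by openness together with the axiom-type rules. For instance, $t \cong t \in \Psi_R$ would let us apply $\ER$, yielding a premise-free conclusion matching $\Gamma'_n \to \Delta'_n$ and closing the branch; similarly $\Ax$ rules out $\beta$ appearing on both sides, $\ES$, $\ET$, $\EC$ rule out the corresponding symmetry/transitivity/congruence violations, and $\rAT$, $\rAI$, $\rAC$ handle the $\lap_p$ analogues. The completeness clause (either $\beta \in \Psi_L$ or $\beta \in \Psi_R$ for atomic $\beta$ in $L_{P'}^{X'_A,X'_T}$) comes from exhaustion via $\Cut$ on $\beta$: since every variable of $\beta$ already lives in $\var(\Gamma'_n \cup \Delta'_n)$, conditions (2)--(3) of analyticity are satisfied, and condition (1) holds because the new occurrence of $\beta$ on one side would be a fresh side formula.

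The generation clauses (those asserting the presence of witnesses or successor formulas) are handled dually, by applying the rule whose very purpose is to produce such witnesses and invoking exhaustion. Thus $\rNB$ handles $a{:}\phi \in \Psi_L \Rightarrow a{:}{-}\phi \in \Psi_R$, $\rBP$ handles the time-propagation clause, $\rBCR$ and $\rBCL$ handle the two $\lnecb$-clauses, $\rBKR$, $\rKL$, $\rBKL$ handle the three $\phi$-clauses, $\CR1$ and $\CR2$ handle the derived-evidence clauses, and $\rAMR$, $\rAML$ handle the $\forall x.\, x \lap_p a$ clauses. In each case one assumes the clause fails in $(\Psi_L,\Psi_R)$ and shows that applying the matching rule to the branch produces a proper extension that is still analytical (a new side formula appears, so condition (1) is met).

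The main obstacle will be the fresh-variable rules $\rAMR$, $\rBCR$, $\rBKR$, $\rKL$: for these, condition (3) of analyticity requires the witness variable to be genuinely new, and the witness that the Hintikka clause demands must in turn be sought \emph{inside} $X'_A$. The resolution is standard: if the clause fails, then the rule is \emph{still} applicable with a fresh variable $b \notin \var_A(\Gamma'_n \cup \Delta'_n)$, yielding an analytical extension; but the Hintikka clause itself only requires the resulting witness to lie in $X'_A$ (which by definition already includes such fresh variables once the extension exists). One must therefore be careful to read the Hintikka conditions relative to the set $X'_A = \var_A(\Psi_L \cup \Psi_R)$ rather than relative to $\var_A(\Gamma'_n \cup \Delta'_n)$, and to check that the appeal to exhaustion at the putative extension step is legitimate under conditions (1)--(3). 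Once this bookkeeping is settled, each of the remaining clauses reduces to a direct inspection of the corresponding rule.
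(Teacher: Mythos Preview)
Your plan is correct and mirrors the paper's own proof almost exactly: the paper also sets $\Psi_L=\bigcup\Gamma'_i$, $\Psi_R=\bigcup\Delta'_i$, and then verifies each Hintikka clause by the same dichotomy you describe---the ``closure'' clauses via openness plus the no-premise rules ($\Ax$, $\ER$, $\ES$, $\ET$, $\EC$, $\rAT$, $\rAI$, $\rAC$), the atomic completeness clause via $\Cut$, and the remaining ``generation'' clauses by assuming failure and exhibiting an analytical one-step extension using the matching rule, contradicting exhaustion.

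One small point: your fresh-variable paragraph is slightly muddled. You write that $X'_A$ ``already includes such fresh variables once the extension exists'', but the extension never exists---that \emph{is} the contradiction. The actual mechanism (which the paper uses implicitly) is the other direction: if the existential Hintikka clause fails for every $a'\in X'_A$, then in particular the rule was never applied along the branch with the branch passing through the fresh-variable premise (else the fresh $b$ introduced at that step would lie in some $\Gamma'_j$ or $\Delta'_j$, hence in $X'_A$, and would witness the clause). Therefore the principal formula survives to $\Gamma'_n\to\Delta'_n$, and now a fresh $b\notin\var_A(\Gamma'_n\cup\Delta'_n)$ lets you extend analytically, contradicting exhaustion. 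Your core sentence ``if the clause fails, then the rule is still applicable with a fresh $b$, yielding an analytical extension'' is exactly right; just drop the parenthetical about the extension supplying the witness.
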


\begin{prop}\label{thm:thvsexp} 
If a sequent is not a theorem then it has an expansion with a rooted  open analytical exhausted branch.
\end{prop}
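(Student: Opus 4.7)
The plan is to prove the contrapositive: assuming $\Gamma \to \Delta$ has no expansion with a rooted open analytical exhausted branch, I will show $\der \Gamma \to \Delta$.

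First I would define a canonical bottom-up saturation procedure starting from the sequent $\Gamma \to \Delta$ as the root. The procedure maintains a finite tree of sequents and, at each stage, selects an unprocessed leaf $\Gamma' \to \Delta'$ together with a rule $u$ whose conclusion pattern matches $\Gamma' \to \Delta'$, adding the premises of $u$ as children. Rule applications would be scheduled fairly so that every applicable instance is eventually tried on every decomposable formula along every branch; fresh-variable side conditions in rules such as $\rBCR$, $\rAMR$, $\rBKR$ and $\rKL$ are satisfied by drawing names from an unused reservoir, so that condition~3 of the analyticity definition holds automatically and condition~2 is enforced by always picking rule parameters inside the variables already accumulated along the branch. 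A rule application is performed only when at least one premise introduces, to the accumulated context, a non-fresh side formula not already present on the corresponding side, ensuring condition~1 as well.

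Next, I would argue that every maximal branch of the resulting expansion is finite and closes at a no-premise rule. By construction, every branch generated by the procedure is analytical. If some maximal branch were open and admitted no further analytical extension, it would be precisely a rooted open analytical exhausted branch, contradicting the hypothesis. Hence every branch must eventually reach a sequent that is the conclusion of a no-premise rule -- namely $\Ax$, $\ER$, $\ES$, $\ET$, $\EC$, $\rAT$, $\rAI$, $\rAC$, $\rNB$, $\rNK$, $\rBP$ or $\rKP$ -- at which point the branch closes. The resulting finite tree of rule applications is then a derivation of $\Gamma \to \Delta$, establishing $\der \Gamma \to \Delta$ as required.

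The main obstacle is arguing that the saturation procedure actually terminates in a finite tree, given that several rules introduce fresh variables that enlarge the available formula space. The decisive point is that condition~1 forces each analytical extension to record a genuinely new non-fresh side formula; applications whose only new content involves fresh variables do not count as productive analytical steps and thus cannot indefinitely sustain a branch on their own. Combined with the fair scheduling, this gives a well-founded measure on branch growth: any branch that does not close is eventually stuck, either producing an open exhausted leaf (ruled out by the hypothesis) or forcing the next applicable rule to close it via a no-premise rule. Ensuring that this combinatorial bookkeeping is airtight -- in particular, that fairness plus condition~1 together preclude infinite analytical branches -- is the technical heart of the argument.
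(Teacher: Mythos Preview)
Your contrapositive strategy and the saturation construction are exactly what the paper does, though the paper compresses the whole argument into two sentences: it asserts that every sequent has an expansion in which all rooted branches are analytical and exhausted, and that if none of those branches is open then the expansion is itself a derivation. You are right that the termination of analytical extension is where the real work lies; the paper simply declares it straightforward without spelling out the well-founded measure you gesture at.

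One factual correction: $\rNB$, $\rNK$, $\rBP$ and $\rKP$ each have a single premise, so they are not no-premise rules and do not close branches in the sense of the definition. The rules without premises are exactly $\Ax$, $\ER$, $\ES$, $\ET$, $\EC$, $\rAT$, $\rAI$ and $\rAC$. This slip does not damage the structure of your argument, but the list should be fixed.
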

\begin{proof}
It is straightforward to see that every sequent has an expansion where all rooted branches are analytical and exhausted.  Furthermore,  if all the rooted branches in that expansion  are not open then  the expansion is a derivation. 
\end{proof}

Capitalizing on the previous results in this section, we are able to establish the completeness result at the level of sequents.

\begin{prop}\label{thm:sequentcompleteness} 
If a sequent is valid then it is a theorem.
\end{prop}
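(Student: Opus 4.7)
The plan is to prove the contrapositive: if $\Gamma \to \Delta$ is not a theorem, then it is not valid. The argument chains together the three preceding propositions, which have been set up precisely for this purpose, so the proof itself amounts to bookkeeping rather than any new construction.

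First I would invoke Proposition~\ref{thm:thvsexp} on the non-theorem $\Gamma \to \Delta$ to obtain an expansion possessing a rooted open analytical exhausted branch $\Gamma'_1 \to \Delta'_1 \cdots \Gamma'_n \to \Delta'_n$, where, because the branch is rooted, $\Gamma'_1 = \Gamma$ and $\Delta'_1 = \Delta$. Next I would apply Proposition~\ref{prop:Hintikkafrombranch} to this branch, obtaining a Hintikka pair $H=(\Psi_L,\Psi_R)$ with $\Psi_L = \bigcup_{i=1}^n \Gamma'_i$ and $\Psi_R = \bigcup_{i=1}^n \Delta'_i$, over $\var_P(\Gamma'_n \cup \Delta'_n)$, $\var_A(\Gamma'_n \cup \Delta'_n)$ and $\var_T(\Gamma'_n \cup \Delta'_n)$. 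In particular $\Gamma \subseteq \Psi_L$ and $\Delta \subseteq \Psi_R$.

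Finally, Proposition~\ref{prop:HKsat} delivers an interpretation structure $I_H$ and an assignment $\rho$ over $I_H$ such that $I_H\rho \sat \psi$ for every $\psi \in \Psi_L$ and $I_H\rho \not\sat \psi$ for every $\psi \in \Psi_R$. Consequently $I_H\rho \sat \gamma$ for every $\gamma \in \Gamma$ while $I_H\rho \not\sat \delta$ for every $\delta \in \Delta$, so $I_H\rho \not\sat \Gamma \to \Delta$. This contradicts the assumed validity of the sequent, yielding the contrapositive and hence the claim.

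The step that carries all the genuine content is of course Proposition~\ref{prop:HKsat}, whose Hintikka-style model construction does the real work; the present argument is only a packaging theorem. The single subtlety worth checking is that the Hintikka pair extracted from the branch really includes $\Gamma$ on the left and $\Delta$ on the right, which follows immediately from the rootedness condition $\Gamma'_1 \to \Delta'_1 = \Gamma \to \Delta$ built into the definition of a rooted branch, so no additional manipulation is required.
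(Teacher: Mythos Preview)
Your argument follows the paper's proof essentially verbatim: contrapositive, then Proposition~\ref{thm:thvsexp}, then Proposition~\ref{prop:Hintikkafrombranch}, then Proposition~\ref{prop:HKsat}. The one step the paper makes explicit that you skip is that $I_H$ is an interpretation structure only over $P'=\var_P(\Gamma'_n\cup\Delta'_n)$, whereas validity of a sequent over $P$ is quantified over interpretation structures for the full signature $P$; the paper therefore takes an interpretation structure $I$ over $P$ whose reduct to $P'$ is $I_H$ and observes that satisfaction of the formulas in $\Gamma\cup\Delta$ is unchanged, so $I\rho\not\sat\Gamma\to\Delta$. This is a routine expansion (set $V$ arbitrarily on $P\setminus P'$), but strictly speaking it is needed to match the definition of validity.
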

\begin{proof}
Assume that the sequent $\Gamma \to \Delta$ is not a theorem. Then, by Proposition~\ref{thm:thvsexp}, it has an expansion with a rooted  open analytical exhausted branch. So, by Proposition~\ref{prop:Hintikkafrombranch},
there is an Hintikka pair $H=(\Psi_L,\Psi_R)$ over $\var_P(\Gamma'_n \cup \Delta'_n)$, $\var_A(\Gamma'_n \cup \Delta'_n)$ and $\var_T(\Gamma'_n \cup \Delta'_n)$ such that $\Gamma \subseteq \Psi_L$ and $\Delta \subseteq \Psi_R$. Hence, by Proposition~\ref{prop:HKsat}, 
there are an interpretation structure $I_H$ over $\var_P(\Gamma'_n \cup \Delta'_n)$ and an assignment $\rho$ over $I_H$ such that $I_H \rho \sat \Gamma$ and  $I_H \rho \not \sat \delta$ for each $\delta \in \Delta$. Let $I$ be an interpretation structure over $P$ such that $I_H$ is the reduct of $I$ to $\var_P(\Gamma'_n \cup \Delta'_n)$. Observe that $\rho$ is also an assignment over $I$. 
Hence, $I \rho \sat \Gamma$ and  $I \rho \not \sat \delta$ for each $\delta \in \Delta$.
Thus, $I \rho \not \sat \Gamma \to \Delta$ and so $\Gamma \to \Delta$ is not valid. 
\end{proof}

Finally, we can show that the Time-Stamped Claim Logic is complete.
\begin{prop}
Let $\Gamma \subseteq L_P$ be a finite set and $\alpha \in L_P$. Then $\Gamma \der \alpha$ whenever $\Gamma \ent \alpha$.
\end{prop}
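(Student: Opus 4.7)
The plan is to reduce the statement directly to the sequent-level completeness result (Proposition~\ref{thm:sequentcompleteness}) proved just above. The argument is essentially a repackaging of entailment as sequent validity, together with an unfolding of the definition of $\der$ at the level of formulas.

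First I would assume $\Gamma \ent \alpha$ and aim to show that the sequent $\Gamma \to \alpha$ is derivable, because $\Gamma \der \alpha$ then follows immediately: by its definition, $\Gamma \der \alpha$ holds provided there exists a finite $\Gamma' \subseteq \Gamma$ with $\der \Gamma' \to \alpha$, and the hypothesis that $\Gamma$ itself is finite lets us take $\Gamma' = \Gamma$ as witness.

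The central observation is that formula-level entailment and sequent validity coincide in this situation. Unfolding the definition of $\sat$ for a sequent, $I \rho \sat \Gamma \to \alpha$ means that whenever $I \rho \sat \Gamma$ there is some $\delta$ in the right-hand multiset $\{\alpha\}$ with $I \rho \sat \delta$, which collapses to $I \rho \sat \alpha$. This is exactly the condition appearing in the definition of $\Gamma \ent \alpha$, so from the hypothesis we obtain $\ent \Gamma \to \alpha$, viewing the finite set $\Gamma$ on the left of the sequent as the corresponding multiset of its elements. Applying Proposition~\ref{thm:sequentcompleteness} to this valid sequent then yields $\der \Gamma \to \alpha$, which completes the argument.

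There is no real obstacle here: all the delicate work, including the construction of an interpretation structure from a Hintikka pair and the existence of open exhausted analytical branches for non-theorems, is already encapsulated in Proposition~\ref{thm:sequentcompleteness}. The only points that warrant bookkeeping are the identification of the finite set $\Gamma$ with a multiset on the sequent, which is harmless since satisfaction only depends on the underlying set of assertions, and the straightforward verification that $\Gamma$ itself witnesses the existential clause in the definition of $\Gamma \der \alpha$.
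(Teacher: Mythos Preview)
Your proposal is correct and follows essentially the same approach as the paper: assume $\Gamma \ent \alpha$, observe that this amounts to $\ent \Gamma \to \alpha$, invoke Proposition~\ref{thm:sequentcompleteness} to obtain $\der \Gamma \to \alpha$, and conclude $\Gamma \der \alpha$ using the finiteness of $\Gamma$. The paper's proof is terser but structurally identical.
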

\begin{proof}
Assume that $\Gamma \ent \alpha$. Then $\ent \Gamma \to \alpha$. Thus, 
by Proposition~\ref{thm:sequentcompleteness}, $\der \Gamma \to \alpha$ and so $\Gamma \der \alpha$.
\end{proof}

\section{A Concrete Case Study in Cyber-Attack Attribution}
\label{sec:case-study}
In order to show how Time-Stamped Claim Logic works, we consider a concrete case study inspired by the Ukraine Power Grid Attack that occurred in December 2015~\cite{case2016analysis}. For the sake of simplicity, we removed the references to the real actors involved in the attack and substituted them with generic ones. 
A good part of the provided evidence
will be similar to the one found for this attack --- we just carried out some simplifications for the sake of understandability.

\subsection{The Set-Up}
We will denote the analyzed attack with $\Att$ and its  victim with $\Victim$.
We have 11 agents/sources that provides information about this attack: $\FE$, $\Sans$ and $\SC$ are three different security companies;
$\DHS$ is the security department of an impartial country;
$\R$ and $\ROne$ are two sources with connections with country $\Russia$; 
$\BBC$ is a well-known international news agency;
$\U$ is a non-verified source; 
$\In$ is the forensics investigator hired by the victim;
$\SansIn$ is the forensics investigator of company $\Sans$;
$\Blog$ is an analyst that writes blog posts related to cyber-attacks.
Furthermore, $\Russia$ is a country,
 and $\Sandworm$ denotes a famous group of hackers.

For what concerns time, $t'$ is Spring 2015, $t''$ is December 2015, whereas $t$ is a more general instant of time that can spread from Spring 2015 to (including) December 2015. 

\subsection{The Evidence}

In the following, we list the evidence provided by the analyst. We start with the given evidence and later provide the given derived evidence. 

$\FE$ states
 that the IPs from where the attack was originated are geolocated in $\Russia$ ($\geoSourceIP$).
\begin{displaymath}\small
\renewcommand\arraystretch{1.3}
\begin{array}{l}
\FE : t\cdot\geoSourceIP^{\Russia}_{\Att}\\
\end{array}
\end{displaymath}
$\FE$ and $\Sans$ both state that the used malware ($\usedMalware$) in this attack ($\Att$) was a version of the \textrm{BlackEnergy} malware, $\BEthree$, whereas $\R$ states that the malware used in the attack was not $\BEthree$ but $\KillDisk$. The provided information holds as different attackers can use different malware or pieces of software for the same attack. 
\begin{displaymath}\small
\renewcommand\arraystretch{1.3}
\begin{array}{ll}
\FE: t\cdot\usedMalware^{\BEthree}_{\Att}
 \  \ \qquad  \qquad  & \ \ 
\R: {-} t\cdot\usedMalware^{\BEthree}_{\Att}\\
\Sans: t\cdot\usedMalware^{\BEthree}_{\Att}
\  \ \qquad \qquad  & \ \
\R: t\cdot\usedMalware^{\KillDisk}_{\Att}
\end{array}
\end{displaymath}
Source $\Sans$ states that $\KillDisk$ was used during the attack, as it
found traces of the $\KillDisk$ malware when analysing the infected system.
$\DHS$ and $\FE$ both state that $\BEthree$ is similar to its previous version $\BEtwo$, whereas $\R$ denies this statement. 
\begin{displaymath}\small
\renewcommand\arraystretch{1.3}
\begin{array}{ll}
\Sans: t\cdot\usedMalware^{\KillDisk}_{\Att}
 \  \ \qquad  \qquad  & \ \ 
\FE:  t\cdot\simil^{\BEthree}_{\BEtwo}\\
\DHS: t\cdot\simil^{\BEthree}_{\BEtwo}
 \  \ \qquad  \qquad  & \ \ 
\R: {-}t\cdot\simil^{\BEthree}_{\BEtwo}
\end{array}
\end{displaymath}
Furthermore, at $\DHS$'s website is stated that $\BEtwo$ is used mainly ($\usedBy$) by a group of attackers denoted by $\Sandworm$.
$\FE$ states that $\Sandworm$ targeted in the past mainly victims 
($\targetVictim$) that have shown some form of \emph{opposition} against $\Russia$, its government, or its economical interest, denoted by $\Russia\_\Opponent$. 
\begin{displaymath}\small
\renewcommand\arraystretch{1.3}
\begin{array}{ll}
\DHS: t\cdot\usedBy^{\BEtwo}_{\Sandworm}
 \  \ \qquad  \qquad  & \ \ 
\FE: t\cdot\targetVictim^{\Russia\_\Opponent}_{\Sandworm}
\end{array}
\end{displaymath}
Let us now introduce the evidence related the negative consequences of the attack $\Att$. 
$\R$ states that the victim suffered from economical losses from the attack ($\econLoss$), while the news agency $\BBC$ states that the main losses were in reputation with respect to the citizens ($\repLoss$). Other agents like $\Sans$ states that the damages of $\Att$ were limited ($\limitedDamage$), whereas 
$\Victim$ states that this assertion is not true. 
\begin{displaymath}\small
\renewcommand\arraystretch{1.3}
\begin{array}{ll}
\R: t\cdot\econLoss^{\Att}_{\Victim}
 \  \ \qquad  \qquad  & \ \ 
 \Sans: t\cdot\limitedDamage^{\Att}_{\Victim}\\
\BBC: t\cdot\repLoss^{\Att}_{\Victim}
 \  \ \qquad  \qquad  & \ \ 
\U: {-} t\cdot\limitedDamage^{\Att}_{\Victim}
\end{array}
\end{displaymath}
Further information were provided by the victim's forensics investigator $\In$, which states that he $\found$, in the victim's email server, phishing emails containing $\BEthree$, dated in Spring 2015, denoted by $t'$.
Furthermore, $\In$ states that in $t'$ there was a campaign of phishing emails ($\spearPhishing$)
and the $\infection$ with $\BEthree$ happened through those phishing emails.
$\Sans$ states that $\BEthree$ is able to create a $\backdoor$ to the victim's system. 
The statement about the infection was also confirmed by investigator $\SansIn$, who states also 
that the infection with $\BEthree$ did not happened by exploiting the \emph{Human Machine Interaction} (HMI) vulnerabilities, and this vulnerabilities were not exploited
($\exploitedVuln$). This last statement is provided also by $\In$.
\begin{displaymath}\small
\renewcommand\arraystretch{1.3}
\begin{array}{ll}
 \In: t'\cdot\found^{\BEthree}_{\phishEmails}
  \  \ \qquad  \qquad  & 
   \SansIn: t'\cdot\infection^{\BEthree}_{\phishEmails}
 \\
 
  \In: t'\cdot\spearPhishing^{\phishEmails}_{\Att}
 \  \ \qquad  \qquad  & 
 \SansIn: {-}t'\cdot\infection^{\BEthree}_{\HMIvuln}

\\

\In: t'\cdot\infection^{\BEthree}_{\phishEmails}
 \  \ \qquad  \qquad  & 
  \SansIn: {-}t'\cdot\exploitedVuln^{\Att}_{\HMIvuln}

\\

\Sans: t'\cdot\backdoor^{\BEthree}_{\Victim}
 \  \ \qquad  \qquad  & 
   \In: {-}t'\cdot\exploitedVuln^{\Att}_{\HMIvuln}
\end{array}
\end{displaymath}
There are more information about how $\Victim$ was infected. 
In particular, $\Sans$ states that $\BEtwo$ infection occurs using HMI vulnerabilities, whereas $\ROne$ states that $\BEthree$ infection did not occurred using phishing emails. 
$\Sans$ states that $\KillDisk$ is used by $\Sandworm$, while $\R$ states the opposite,
and $\ROne$ states that $\KillDisk$ is used by $\Russia$. 
The news agency, $\BBC$, states the $\news$ about the blackout of the victim in December 2015, denoted by $t''$,
and that there was a political conflict ($\polConflict$) between the victim and country $\Russia$ in that period. 
\begin{displaymath}\small
\renewcommand\arraystretch{1.3}
\begin{array}{ll}
\Sans: t'\cdot\infection^{\BEtwo}_{\HMIvuln}
 \  \ \qquad  \qquad  & \ \ 
\ROne: t\cdot\usedBy^{\KillDisk}_{\Russia}\\

\ROne: {-}t'\cdot\infection^{\BEthree}_{\phishEmails}
 \  \ \qquad  \qquad  & \ \ 

\BBC: t''\cdot\news^{\Blackout}_{\Victim}\\

\Sans: t\cdot\usedBy^{\KillDisk}_{\Sandworm}
 \  \ \qquad  \qquad  & \ \ 
\BBC: t\cdot\polConflict^{\Russia}_{\Victim}\\

\R: {-}t\cdot\usedBy^{\KillDisk}_{\Sandworm}
 \  \ \qquad  \qquad  & \ \ 
\\

\end{array}
\end{displaymath}

\begin{figure}[t]
\sder{
\hline
\\[-2mm]
1 \ $\psi_1,\delta_2 \to 
\underline{t \cdot\limitedDamage^{\Att}_{\Victim}}$ &   \rBKR 2,3,4 \\[2mm]
2 \ $b:{-} t \cdot\limitedDamage^{\Att}_{\Victim}, \underline{b:\lnecb {-} t \cdot\limitedDamage^{\Att}_{\Victim}}, \psi_1,\delta_2 \to$\\
\hspace*{10mm} $ 
t \cdot\limitedDamage^{\Att}_{\Victim}$ & \rBCL:5,6  \\[2mm]
3 \ $\underline{\psi_1},\delta_2 \to 
t \cdot\limitedDamage^{\Att}_{\Victim},$\\
\hspace*{10mm} $\underline{\Sans: t \cdot\limitedDamage^{\Att}_{\Victim}}$ & \Ax \\[2mm]
4 \ $\psi_1,\delta_2 \to t \cdot\limitedDamage^{\Att}_{\Victim},$\\
\hspace*{10mm} $\underline{\Sans:\lnecb t \cdot\limitedDamage^{\Att}_{\Victim}}$ & \rBCR:8 \\[2mm]
5 \ $b:{-} t \cdot\limitedDamage^{\Att}_{\Victim}, b:\lnecb {-} t \cdot\limitedDamage^{\Att}_{\Victim}, \psi_1,\underline{\delta_2} \to$\\
\hspace*{10mm} $ 
t \cdot\limitedDamage^{\Att}_{\Victim},b \lap_{\limitedDamage} \Sans$ & \rAML:7  \\[2mm]
6 \ $b:{-} t \cdot\limitedDamage^{\Att}_{\Victim}, b:\lnecb {-} t \cdot\limitedDamage^{\Att}_{\Victim}, \underline{\psi_1},\delta_2 \to$\\
\hspace*{10mm} $ 
t \cdot\limitedDamage^{\Att}_{\Victim}, \underline{\Sans:t \cdot\limitedDamage^{\Att}_{\Victim}}$ & \Ax  \\[2mm]
7 \ $\underline{b \lap_{\limitedDamage} \Sans}, b:{-} t \cdot\limitedDamage^{\Att}_{\Victim}, b:\lnecb {-} t \cdot\limitedDamage^{\Att}_{\Victim}, $\\
\hspace*{10mm} $\psi_1,\delta_2 \to 
t \cdot\limitedDamage^{\Att}_{\Victim}, \underline{b \lap_{\limitedDamage} \Sans}$ & \Ax  \\[2mm]
8 \ $b:{-} t \cdot\limitedDamage^{\Att}_{\Victim}, \Sans \lap_{\limitedDamage} b,\psi_1, \underline{\delta_2} \to$\\
\hspace*{10mm} $ t \cdot\limitedDamage^{\Att}_{\Victim}$ & \rAML:9 \\[2mm]
9 \ $b\lap_{\limitedDamage} \Sans, \underline{b:{-} t \cdot\limitedDamage^{\Att}_{\Victim}}, \Sans \lap_{\limitedDamage} b,\psi_1,\delta_2 \to$\\
\hspace*{10mm} $ t \cdot\limitedDamage^{\Att}_{\Victim}$ & \rNB:10 \\[2mm]
10 \ $b\lap_{\limitedDamage} \Sans ,b:{-} t \cdot\limitedDamage^{\Att}_{\Victim}, \Sans \lap_{\limitedDamage} b,\psi_1,\delta_2 \to$\\
\hspace*{10mm} $ t \cdot\limitedDamage^{\Att}_{\Victim},b: t \cdot\limitedDamage^{\Att}_{\Victim}$ & \rAC %
\\[4mm]\hline}
\caption{Derivation of $\psi_1,\delta_2 \to 
{t \cdot\limitedDamage^{\Att}_{\Victim}}$ (see Subsection~\ref{subsec:conclusions})}\label{fig:der3}
\end{figure}

Let us now introduce the provided derived evidence for this cyber-attack.
The sources provide discordant derived evidence with respect to the motives of the attack, whether the attack had or not economical motives. 
In particular, $\Sans$ states that attacker did not have economical motives ($\econMotives$) to perform the attack as there were not considerable economical losses ($\econLoss$) since the damages caused by the attack to the victim were limited ($\limitedDamage$).
On the other hand, $\R$ states that the attacker was pushed by economical motives because it caused to the victim economical losses and reputation losses ($\repLoss$). 
\begin{displaymath}\small
\renewcommand\arraystretch{1.3}
\begin{array}{l}
\Sans:: (t \cdot\limitedDamage^{\Att}_{\Victim} / {-} t \cdot\econLoss^{\Att}_{\Victim})\\
\Sans:: ({-}t \cdot\econLoss^{\Att}_{\Victim} / {-} t\cdot\econMotives^{\attacker}_{\Att}) \\
\R:: (t \cdot\econLoss^{\Att}_{\Victim}, t\cdot\repLoss^{\Att}_{\Victim} / t\cdot\econMotives^{\attacker}_{\Att})
\end{array}
\end{displaymath}
There are discordant derived evidence also about how the attacker got access to the victim system ($\victimAccess$).  
$\SC$ states that the attacker got access by exploiting the HMI vulnerabilities ($\HMIvuln$), as during the attack the $\BEthree$ malware was used, and this is similar to $\BEtwo$, which usually infects ($\infection$)
the system by exploiting the HMI vulnerabilities. 
$\Sans$ states that the access to the victim system was done using phishing emails ($\phishEmails$) and not through the exploitation of HMI vulnerabilities, because phishing emails were found in the victim system, thus there was a spear-phishing campaign ($\spearPhishing$), the $\BEthree$ malware was found in the phishing emails ($\found$), and the system was infected by $\BEthree$ through the emails. 
$\FE$ states as well that the victim's system was not accessed using HMI vulnerabilities, as the system was not infected by $\BEthree$ through these vulnerabilities, and they were not exploited ($\exploitedVuln$).
\begin{displaymath}\small
\renewcommand\arraystretch{1.3}
\begin{array}{l}
\SC::(t'\cdot\infection^{\BEtwo}_{\HMIvuln},  t\cdot\simil^{\BEthree}_{\BEtwo}/ t'\cdot\infection^{\BEthree}_{\HMIvuln})\\
\SC::  (t'\cdot\infection^{\BEthree}_{\HMIvuln}, t\cdot\usedMalware^{\BEthree}_{\Att}  / t'\cdot\victimAccess^{\Att}_{\HMIvuln})\\

\Sans::  (t'\cdot\spearPhishing^{\phishEmails}_{\Att},  t'\cdot\found^{\BEthree}_{\phishEmails},  t'\cdot\infection^{\BEthree}_{\phishEmails}  / 
\\  \qquad \quad  t'\cdot\victimAccess^{\Att}_{\phishEmails})\\

\Sans::  (t'\cdot\spearPhishing^{\phishEmails}_{\Att},  t'\cdot\found^{\BEthree}_{\phishEmails},  t'\cdot\infection^{\BEthree}_{\phishEmails}  / 
\\ \qquad \quad
{-} t'\cdot\victimAccess^{\Att}_{\HMIvuln})\\

\FE::  ({-} t'\cdot\infection^{\BEthree}_{\HMIvuln}, {-} t'\cdot\exploitedVuln^{\Att}_{\HMIvuln} /{-} t'\cdot\victimAccess^{\Att}_{\HMIvuln})
\end{array}
\end{displaymath}
The  derived evidence provided by the various sources have discordances also for the starting date of the attack.
In particular, $\Sans$ states that the attack started in \emph{Spring 2015}, denoted by $t'$, as in that period there was a spear-phishing campaign, the system was infected with $\BEthree$ through the phishing emails, and a backdoor was created to the victim's system. $\R$ instead states that the attack started in \emph{December 2015}, denoted by $t''$, as the news about the victim's blackout dates to that period. $\FE$ contradicts $\R$, by stating that the attack did not start in $t''$, as the infection occurred in $t'$, and the news about the attack spread in $t''$. $\Sans$ states the same derived evidence as $\FE$, but based on the evidence that the attack started in $t'$ and that $t'$ is different from $t''$.
\begin{displaymath}\small
\renewcommand\arraystretch{1.3}
\begin{array}{l}
\Sans:: (t'\cdot\spearPhishing^{\phishEmails}_{\Att}, t'\cdot\infection^{\BEthree}_{\phishEmails}, t' \cdot\backdoor^{\BEthree}_{\Victim} / \\
\qquad \quad t'\cdot\started^{\Att})\\

\R:: (t''\cdot\news^{\Blackout}_{\Victim} / t''\cdot\started^{\Att})\\

\FE:: (t''\cdot\news^{\Blackout}_{\Victim}, t'\cdot\infection^{\BEthree}_{\phishEmails}/ {-} t''\cdot\started^{\Att})\\ 
 
\Sans:: (t'\cdot\started^{\Att}, t' \not \cong t'' / {-} t''\cdot\started^{\Att})
\end{array}
\end{displaymath}
The last discordance arises about the possible culprit of the attack. $\FE$ states that the group of attackers $\Sandworm$ is a possible culprit, as malware $\BEthree$ was used during $\Att$, which is similar to malware $\BEtwo$ that is used by $\Sandworm$.
$\R$ states that the possible culprit of the attack is not country $\Russia$, as the used malware was $\KillDisk$ and $\KillDisk$ is not used
neither by $\Sandworm$ nor by $\Russia$.
Source $\Blog$ 
states that the possible culprit of the attack is country $\Russia$, as 
there was a \emph{political conflict} in that period between $\Russia$ and the victim $\Victim$. The attack was political, as $\Victim$ did not suffer from economical losses and the source IPs of the attack were geolocated in country $\Russia$.
\begin{displaymath}\small
\renewcommand\arraystretch{1.3}
\begin{array}{l}
\FE::  (t\cdot\usedMalware^{\BEthree}_{\Att}, t\cdot\simil^{\BEthree}_{\BEtwo}, t\cdot\usedBy^{\BEtwo}_{\Sandworm} / 
t\cdot\possibleCulprit^{\Att}_{\Sandworm})\\

\R:: (t \cdot\usedMalware^{\KillDisk}_{\Att}, {-} t\cdot\usedBy^{\KillDisk}_{\Sandworm}, {-} t\cdot\usedBy^{\KillDisk}_{\Russia} / 
\\ \qquad \quad 
{-} t\cdot\possibleCulprit^{\Att}_{\Russia})\\

\Blog:: ({-}t\cdot\econLoss^{\Att}_{\Victim}, 
t\cdot\polConflict^{\Russia}_{\Victim},
t\cdot\geoSourceIP^{\Russia}_{\Att} /
t\cdot\possibleCulprit^{\Att}_{\Russia})
\end{array}
\end{displaymath}

\begin{figure}[t]
\sder{
\hline
\\[-2mm]
1 \ $\psi_1, \underline{\psi_2}, \delta_2 \to \Sans: {-} t \cdot\econLoss^{\Att}_{\Victim}$ & \CR2:2,3 \\[2mm]
2 \ $\psi_1,\delta_2 \to \Sans:{-} t \cdot\econLoss^{\Att}_{\Victim},t \cdot\limitedDamage^{\Att}_{\Victim}$ & (D1) \\[2mm]
3 \ $\underline{\Sans: {-} t \cdot\econLoss^{\Att}_{\Victim},\psi_1}, \delta_2 \to \underline{\Sans: {-} t \cdot\econLoss^{\Att}_{\Victim}}$ &  \Ax
\\[4mm]\hline}
\caption{Derivation of $\psi_1, {\psi_2}, \delta_2 \to \Sans: {-} t \cdot\econLoss^{\Att}_{\Victim}$ (see Subsection~\ref{subsec:conclusions})}\label{fig:der4}
\end{figure}

Let us now introduce the \emph{relations of trust} 
the analyst has with respect to the sources that provided the above evidence. 
In particular, the analyst trusts:
$\Sans$ more than $\R$ about the economical losses of $\Victim$ from $\Att$;
$\ROne$ more than $\In$ and $\SansIn$ more than $\ROne$ for how the $\Att$ was able to infect $\Victim$, ($\infection$);
$\Sans$ more than $\R$ for when the attack started;
$\ROne$
and $\Sans$ more than $\R$ about who has used a particular malware in the past;
and $\FE$ more than $\Blog$ about who is a possible culprit of $\Att$:
\begin{displaymath}\small
\renewcommand\arraystretch{1.3}
\begin{array}{ll}
\R \lap_{\econLoss} \Sans
 \  \ \qquad  \qquad  & \ \ 
\R \lap_{\usedBy} \ROne\\

\In \lap_{\infection} \ROne 
 \  \ \qquad  \qquad  & \ \ 
\R \lap_{\usedBy} \Sans \\

\ROne \lap_{\infection} \SansIn
 \  \ \qquad  \qquad  & \ \ 
\Blog \lap_{\possibleCulprit} \FE\\

\R \lap_{\started} \Sans
 \  \ \qquad  \qquad  & \ \ 
\end{array}
\end{displaymath}
\begin{figure}[t]
\sder{
\hline
\\[-2mm]
1 \ $\psi_1, \underline{\psi_2}, \delta_1, \delta_2 \to {-} t \cdot\econLoss^{\Att}_{\Victim}$ & \CR2:2,3 \\[1mm]
2 \ $\psi_1,\delta_1,\delta_2 \to {-} t \cdot\econLoss^{\Att}_{\Victim},t \cdot\limitedDamage^{\Att}_{\Victim}$ & (D1) \\[1mm]
3 \ $\Sans: {-} t \cdot\econLoss^{\Att}_{\Victim},\psi_1,\delta_1,\delta_2 \to \underline{{-} t \cdot\econLoss^{\Att}_{\Victim}}$ &  \rBKR:4,5,6\\[1mm]
4 \ $b:t \cdot\econLoss^{\Att}_{\Victim}, \underline{b: \lnecb t \cdot\econLoss^{\Att}_{\Victim}}, \Sans: {-} t \cdot\econLoss^{\Att}_{\Victim},\psi_1,\delta_1,\delta_2 \to$ & \rBCL:7,8 \\[1mm]
5 \ $\underline{\Sans: {-} t \cdot\econLoss^{\Att}_{\Victim}}, \psi_1, \delta_1, \delta_2 \to {-} t \cdot\econLoss^{\Att}_{\Victim}, \underline{\Sans: {-} t \cdot\econLoss^{\Att}_{\Victim}}$ & \Ax\\[1mm]
6 \ $\Sans: {-} t \cdot\econLoss^{\Att}_{\Victim},\psi_1,\delta_1,\delta_2 \to {-} t \cdot\econLoss^{\Att}_{\Victim},$\\
\hspace*{10mm} $\underline{\Sans: \lnecb ({-} t \cdot\econLoss^{\Att}_{\Victim})}$ & \rBCR:9\\[1mm]
7 \ $b:t \cdot\econLoss^{\Att}_{\Victim}, b: \lnecb t \cdot\econLoss^{\Att}_{\Victim}, \Sans: {-} t \cdot\econLoss^{\Att}_{\Victim}, \psi_1, \underline{\delta_1}, \delta_2 \to$ \\
\hspace*{10mm} $b \lap_{\econLoss} \Sans$ & \rAML:12 \\[1mm]
8 \ $b:t \cdot\econLoss^{\Att}_{\Victim}, b: \lnecb t \cdot\econLoss^{\Att}_{\Victim}, \underline{\Sans: {-} t \cdot\econLoss^{\Att}_{\Victim}}, \psi_1, \delta_1, \delta_2 \to  $ \\
\hspace*{10mm} $\underline{\Sans: {-} t \cdot\econLoss^{\Att}_{\Victim}}$ & \Ax \\[1mm]
9 \ $b: t \cdot\econLoss^{\Att}_{\Victim}, \Sans \lap_{\econLoss} b, \Sans: {-} t \cdot\econLoss^{\Att}_{\Victim}, \psi_1, \underline{\delta_1}, \delta_2 \to$\\
\hspace*{10mm} ${-} t \cdot\econLoss^{\Att}_{\Victim}$ & \rAML:10\\[1mm]
10 \ $b \lap_{\econLoss} \Sans, b: t \cdot\econLoss^{\Att}_{\Victim}, \Sans \lap_{\econLoss} b, \underline{\Sans: {-} t \cdot\econLoss^{\Att}_{\Victim}},$\\
\hspace*{10mm} $\psi_1,\delta_1,\delta_2 \to {-} t \cdot\econLoss^{\Att}_{\Victim}$ & \rNB:11\\[1mm]
11 \ $b \lap_{\econLoss} \Sans, b: t \cdot\econLoss^{\Att}_{\Victim}, \Sans \lap_{\econLoss} b, \Sans: {-} t \cdot\econLoss^{\Att}_{\Victim},$\\
\hspace*{10mm} $\psi_1,\delta_1,\delta_2 \to {-} t \cdot\econLoss^{\Att}_{\Victim}, \Sans: t \cdot\econLoss^{\Att}_{\Victim}$ & \rAC\\[1mm]
12 \ $\underline{b \lap_{\econLoss} \Sans}, b:t \cdot\econLoss^{\Att}_{\Victim}, b: \lnecb t \cdot\econLoss^{\Att}_{\Victim},$ \\
\hspace*{10mm} $\Sans: {-} t \cdot\econLoss^{\Att}_{\Victim},\psi_1,\delta_1,\delta_2 \to \underline{b \lap_{\econLoss} \Sans}$ & \Ax
\\[4mm]\hline}
\caption{Derivation of $\psi_1, {\psi_2}, \delta_1, \delta_2 \to {-} t \cdot\econLoss^{\Att}_{\Victim}$ (see Subsection~\ref{subsec:conclusions})}\label{fig:der5}
\end{figure}
Moreover:
$\Sans$ is the most trusted agent about how the attacker accessed the victim's system, the economical losses, the economical motives and the limited damages; 
$\BBC$ is the most trusted about identifying political conflict between entities;
$\DHS$ is the most trusted about the similarities between malware and used malware in attacks;
$\FE$ is the most trusted about geolocating the source IPs of an attack;
$\FE$ is the most trusted regarding the statement that $\BEthree$ and $\BEtwo$ are similar between each other.

\begin{displaymath}\small
\renewcommand\arraystretch{1.3}
\begin{array}{ll}
\forall x. \, x \lap_{\victimAccess} \Sans
 \  \ \qquad  \qquad  & \ \ 
\forall x. \, x \lap_{\simil} \DHS\\

\forall x. \, x \lap_{\econLoss} \Sans
 \  \ \qquad  \qquad  & \ \ 
 \forall x. \, x \lap_{\usedBy} \DHS\\

\forall x. \, x \lap_{\econMotives} \Sans
 \  \ \qquad  \qquad  & \ \ 
\forall x. \, x \lap_{\usedMalware} \Sans\\

\forall x. \, x \lap_{\limitedDamage} \Sans
 \  \ \qquad  \qquad  & \ \ 
\forall x. \, x \lap_{\geoSourceIP} \FE\\

\forall x. \, x \lap_{\polConflict} \BBC
 \  \ \qquad  \qquad  & \ \ 
\FE: \lnecb \simil^{\BEthree}_{\BEtwo}

\end{array}
\end{displaymath}

\subsection{The Conclusions of the Analyst}\label{subsec:conclusions}
In the following, we show how the analyst can use the Time-Stamped Claim Logic to draw some interesting conclusions from the collected evidence.

Consider the following assertions:
\begin{displaymath}
\renewcommand\arraystretch{1.3}
\begin{array}{l}
\psi_1= \Sans: t\cdot\limitedDamage^{\Att}_{\Victim}\\
\psi_2 = \Sans:: (t \cdot\limitedDamage^{\Att}_{\Victim} / {-} t \cdot\econLoss^{\Att}_{\Victim}) \\
\psi_3 = \Sans:: ({-}t \cdot\econLoss^{\Att}_{\Victim} / {-} t\cdot\econMotives^{\attacker}_{\Att})\\
\delta_1 = \forall x. \, x \lap_{\econLoss} \Sans \\
\delta_2 = \forall x. \, x \lap_{\limitedDamage} \Sans\\
\delta_3 = \forall x. \, x \lap_{\econMotives} \Sans
\end{array}.
\end{displaymath}
We start by giving the derivation in Figure~\ref{fig:der3} of 
\begin{displaymath}
t \cdot\limitedDamage^{\Att}_{\Victim}
\end{displaymath}
from the assumptions $\psi_1$ and $\delta_2$. 
We can then show that
\begin{displaymath}
\Sans: {-} t \cdot\econLoss^{\Att}_{\Victim}
\end{displaymath}
is derived from $\psi_1$, $\psi_2$ and $\delta_2$ as can be seen in Figure~\ref{fig:der4}. 
The derivation in Figure~\ref{fig:der5} infers 
\begin{displaymath}
{-} t \cdot\econLoss^{\Att}_{\Victim}
\end{displaymath}
from $\psi_1$, $\psi_2$, $\delta_1$ and $\delta_2$. 
From $\psi_1$, $\psi_2$, $\psi_3$, $\delta_1$ and $\delta_2$, we can also derive 
$$
\Sans:{-} t\cdot\econMotives^{\attacker}_{\Att}
$$
as can be seen in Figure~\ref{fig:der6}.
\begin{figure}[t]
\sder{
\hline
\\[-2mm]
1 \ $\psi_1, \psi_2, \underline{\psi_3}, \delta_1, \delta_2 \to \Sans:{-} t\cdot\econMotives^{\attacker}_{\Att}$ & \CR2:2,3 \\[2mm]
2 \ $\psi_1,\psi_2,\delta_1,\delta_2 \to {-} t \cdot\econLoss^{\Att}_{\Victim}, \Sans:{-} t\cdot\econMotives^{\attacker}_{\Att}$ & (D2) \\[2mm]
3 \ $\underline{\Sans:{-} t\cdot\econMotives^{\attacker}_{\Att}}, \psi_1, \psi_2, \delta_1, \delta_2 \to \underline{\Sans:{-} t\cdot\econMotives^{\attacker}_{\Att}}$ & \Ax
\\[4mm]\hline}
\caption{Derivation of $\psi_1, \psi_2, {\psi_3}, \delta_1, \delta_2 \to \Sans:{-} t\cdot\econMotives^{\attacker}_{\Att}$ (Subsection~\ref{subsec:conclusions})}\label{fig:der6}
\end{figure}
From $\psi_1$, $\psi_2$, $\psi_3$, $\delta_1$,  $\delta_2$ and $\delta_3$ we can derive  
$$
{-} t\cdot\econMotives^{\attacker}_{\Att}
$$
as can be seen in Figure~\ref{fig:der7}.
\begin{figure}[h]
{\small
\sder{
\hline
\\[-2mm]
1 \ $\psi_1, \underline{\psi_2}, \psi_3, \delta_1, \delta_2, \delta_3 \to{-} t\cdot\econMotives^{\attacker}_{\Att}$ & \CR2:2,3 \\[2mm]
2 \ $\psi_1,\psi_2,\delta_1,\delta_2,\delta_3 \to {-} t\cdot\econMotives^{\attacker}_{\Att},{-}t \cdot\econLoss^{\Att}_{\Victim}$ & (D2) \\[2mm]
3 \ $\Sans: {-} t\cdot\econMotives^{\attacker}_{\Att},\psi_1,\psi_2,\delta_1,\delta_2,\delta_3 \to \underline{{-} t\cdot\econMotives^{\attacker}_{\Att}}$ &  \rBKR:4,5,6\\[2mm]
4 \ $b: t\cdot\econMotives^{\attacker}_{\Att}, \underline{b: \lnecb t\cdot\econMotives^{\attacker}_{\Att}},$\\
\hspace*{10mm} $\Sans: {-} t\cdot\econMotives^{\attacker}_{\Att},\psi_1,\psi_2,\delta_1,\delta_2,\delta_3 \to$ & \rBCL:7,8 \\[2mm]
5 \ $\underline{\Sans: {-} t\cdot\econMotives^{\attacker}_{\Att}}, \psi_1, \psi_2, \delta_1, \delta_2, \delta_3 \to $\\
\hspace*{10mm} ${-} t\cdot\econMotives^{\attacker}_{\Att}, \underline{\Sans: {-} t\cdot\econMotives^{\attacker}_{\Att}}$ & \Ax\\[2mm]
6 \ $\Sans: {-} t\cdot\econMotives^{\attacker}_{\Att},\psi_1,\psi_2,\delta_1,\delta_2,\delta_3 \to {-} t\cdot\econMotives^{\attacker}_{\Att},$\\
\hspace*{10mm} $\underline{\Sans: \lnecb ({-} t\cdot\econMotives^{\attacker}_{\Att})}$ & \rBCR:9\\[2mm]
7 \ $b:t\cdot\econMotives^{\attacker}_{\Att}, b: \lnecb  t\cdot\econMotives^{\attacker}_{\Att},$\\
\hspace*{10mm} $\Sans: {-}t\cdot\econMotives^{\attacker}_{\Att}, \psi_1, \psi_2, \delta_1, \delta_2, \underline{\delta_3} \to b \lap_{\econMotives} \Sans$ & \rAML:12 \\[2mm]
8 \ $b: t\cdot\econMotives^{\attacker}_{\Att}, b: \lnecb  t\cdot\econMotives^{\attacker}_{\Att},$\\
\hspace*{10mm} $\underline{\Sans: {-} t\cdot\econMotives^{\attacker}_{\Att}}, \psi_1, \psi_2, \delta_1, \delta_2, \delta_3 \to  $ \\
\hspace*{10mm} $\underline{\Sans: {-} t\cdot\econMotives^{\attacker}_{\Att}}$ & \Ax \\[2mm]
9 \ $b: t\cdot\econMotives^{\attacker}_{\Att}, \Sans \lap_{\econMotives} b, \Sans: {-} t\cdot\econMotives^{\attacker}_{\Att},$\\
\hspace*{10mm} $\psi_1, \psi_2, \delta_1, \delta_2, \underline{\delta_3} \to {-} t\cdot\econMotives^{\attacker}_{\Att}$ & \rAML:10\\[2mm]
10 \ $b \lap_{\econMotives} \Sans, b: t\cdot\econMotives^{\attacker}_{\Att}, \Sans \lap_{\econMotives} b, $\\
\hspace*{10mm} $\underline{\Sans: {-} t\cdot\econMotives^{\attacker}_{\Att}}, \psi_1, \psi_2, \delta_1, \delta_2, \delta_3 \to {-} t\cdot\econMotives^{\attacker}_{\Att}$ & \rNB:11\\[2mm]
11 \ $b \lap_{\econMotives} \Sans, b: t\cdot\econMotives^{\attacker}_{\Att}, \Sans \lap_{\econMotives} b, $\\
\hspace*{10mm} $\Sans: {-} t\cdot\econMotives^{\attacker}_{\Att},\psi_1,\psi_2,\delta_1,\delta_2,\delta_3 \to$\\
\hspace*{10mm} $ {-} t\cdot\econMotives^{\attacker}_{\Att},\Sans:  t\cdot\econMotives^{\attacker}_{\Att}$ & \rAC\\[2mm]
12 \ $\underline{b \lap_{\econMotives} \Sans}, b:t\cdot\econMotives^{\attacker}_{\Att}, b: \lnecb  t\cdot\econMotives^{\attacker}_{\Att},$\\
\hspace*{10mm} $\Sans: {-}t\cdot\econMotives^{\attacker}_{\Att},\psi_1,\psi_2,\delta_1,\delta_2,\delta_3 \to$ \\
\hspace*{10mm} $\underline{b \lap_{\econMotives} \Sans}$ & \Ax
\\[4mm]\hline}}
\caption{Derivation of $\psi_1, {\psi_2}, \psi_3, \delta_1, \delta_2, \delta_3 \to{-} t\cdot\econMotives^{\attacker}_{\Att}$ }\label{fig:der7}
\end{figure}
From
\begin{displaymath}\small
\renewcommand{\arraystretch}{1.3}
\begin{array}{l}
\psi_4 = \FE::  (t\cdot\usedMalware^{\BEthree}_{\Att}, t\cdot\simil^{\BEthree}_{\BEtwo}, t\cdot\usedBy^{\BEtwo}_{\Sandworm} / 
t\cdot\possibleCulprit^{\Att}_{\Sandworm}) \\

\psi_5 = \Sans: t\cdot\usedMalware^{\BEthree}_{\Att} \\

\psi_6 = \DHS: t\cdot\simil^{\BEthree}_{\BEtwo}\\

\psi_7 = \DHS: t\cdot\usedBy^{\BEtwo}_{\Sandworm}\\

\delta_4 = \forall x. \, x \lap_{\simil} \DHS\\
\delta_5 = \forall x. \, x \lap_{\usedBy} \DHS\\
\delta_6 = \forall x. \, x \lap_{\usedMalware} \Sans
\end{array}
\end{displaymath}
we can conclude
$$
\FE: t\cdot\possibleCulprit^{\Att}_{\Sandworm}
$$
as can be seen in Figure~\ref{fig:der8}.
\begin{figure}[h]
{\small
\sder{
\hline
\\[-2mm]
1 \ $\underline{\psi_4}, \psi_5, \psi_6, \psi_7, \delta_4, \delta_5, \delta_6 \to \FE: t\cdot\possibleCulprit^{\Att}_{\Sandworm}$ & \CR2:2--5 \\[2mm]
2 \ $\psi_5, \psi_6, \psi_7, \delta_4, \delta_5, \delta_6 \to \FE: t\cdot\possibleCulprit^{\Att}_{\Sandworm}, \underline{t\cdot\usedMalware^{\BEthree}_{\Att}}$ & \rBKR:6--8\\[2mm]
3 \ $\psi_5, \psi_6, \psi_7, \delta_4, \delta_5, \delta_6 \to \FE: t\cdot\possibleCulprit^{\Att}_{\Sandworm}, \underline{t\cdot\simil^{\BEthree}_{\BEtwo}}$ & similar to 2 \\[2mm]
4 \ $\psi_5, \psi_6, \psi_7, \delta_4, \delta_5, \delta_6 \to \FE: t\cdot\possibleCulprit^{\Att}_{\Sandworm}, \underline{t\cdot\usedBy^{\BEtwo}_{\Sandworm}}$ & similar to 2 \\[2mm]
5 \ $\underline{\FE: t\cdot\possibleCulprit^{\Att}_{\Sandworm}}, \psi_5, \psi_6, \psi_7, \delta_4, \delta_5, \delta_6 \to \underline{\FE: t\cdot\possibleCulprit^{\Att}_{\Sandworm}}$ & \Ax \\[2mm]
6 \ $b: {-} (t\cdot\usedMalware^{\BEthree}_{\Att}), \underline{b: \lnecb {-}(t\cdot\usedMalware^{\BEthree}_{\Att})}, \psi_5, \psi_6, \psi_7, \delta_4, \delta_5, \delta_6 \to$ \\
\qquad $\FE: t\cdot\possibleCulprit^{\Att}_{\Sandworm}$ & 
\rBCL:9,10
\\[2mm]
7 \ $\underline{\psi_5}, \psi_6, \psi_7, \delta_4, \delta_5, \delta_6 \to \FE: t\cdot\possibleCulprit^{\Att}_{\Sandworm}, t\cdot\usedMalware^{\BEthree}_{\Att},$ \\
\qquad $\underline{\Sans: t\cdot\usedMalware^{\BEthree}_{\Att}}$ & \Ax \\[2mm]
8 \ $\psi_5, \psi_6, \psi_7, \delta_4, \delta_5, \delta_6 \to \FE: t\cdot\possibleCulprit^{\Att}_{\Sandworm}, t\cdot\usedMalware^{\BEthree}_{\Att}$ \\
\qquad $\underline{\Sans: \lnecb t\cdot\usedMalware^{\BEthree}_{\Att}}$ & \rBCR:11 \\[2mm]
9 \ $b: {-}(t\cdot\usedMalware^{\BEthree}_{\Att}), b: \lnecb {-}(t\cdot\usedMalware^{\BEthree}_{\Att}), \psi_5, \psi_6, \psi_7, \delta_4, \delta_5, \underline{\delta_6} \to$ \\
\qquad $\FE: t\cdot\possibleCulprit^{\Att}_{\Sandworm}, b \lap_{\usedMalware} \Sans$ & \rAML:14 \\[2mm]
10 \ $b: {-}(t\cdot\usedMalware^{\BEthree}_{\Att}), b: \lnecb {-}(t\cdot\usedMalware^{\BEthree}_{\Att}), \underline{\psi_5}, \psi_6, \psi_7, \delta_4, \delta_5, \delta_6 \to$ \\
\qquad $\FE: t\cdot\possibleCulprit^{\Att}_{\Sandworm}, \underline{\Sans: t\cdot\usedMalware^{\BEthree}_{\Att}}$ & \Ax \\[2mm]
11 \ $\underline{c: {-}(t\cdot\usedMalware^{\BEthree}_{\Att})}, \Sans \lap_{\usedMalware} c, \psi_5, \psi_6, \psi_7, \delta_4, \delta_5, \delta_6 \to$ \\
\qquad $\FE: t\cdot\possibleCulprit^{\Att}_{\Sandworm}, t\cdot\usedMalware^{\BEthree}_{\Att}$ & \rNB:12 \\[2mm]
12 \ $c: {-}(t\cdot\usedMalware^{\BEthree}_{\Att}), \Sans \lap_{\usedMalware} c, \psi_5, \psi_6, \psi_7, \delta_4, \delta_5, \underline{\delta_6} \to$ \\
\qquad $\FE: t\cdot\possibleCulprit^{\Att}_{\Sandworm}, t\cdot\usedMalware^{\BEthree}_{\Att}, c: t\cdot\usedMalware^{\BEthree}_{\Att}$ & \rAML:13 \\[2mm]
13 \ $\underline{c \lap_{\usedMalware} \Sans}, c: {-}(t\cdot\usedMalware^{\BEthree}_{\Att}), \underline{\Sans \lap_{\usedMalware} c}, $ \\
\qquad $\underline{\psi_5}, \psi_6, \psi_7, \delta_4, \delta_5, \delta_6 \to\FE: t\cdot\possibleCulprit^{\Att}_{\Sandworm}, t\cdot\usedMalware^{\BEthree}_{\Att},$ \\
\qquad $\underline{c: t\cdot\usedMalware^{\BEthree}_{\Att}}$ & \rAC \\[2mm]
14 \ $\underline{b \lap_{\usedMalware} \Sans}, b: {-} t\cdot\usedMalware^{\BEthree}_{\Att}, b: \lnecb {-} t\cdot\usedMalware^{\BEthree}_{\Att}, $ \\
\qquad $\psi_5, \psi_6, \psi_7, \delta_4, \delta_5, \delta_6 \to \FE: t\cdot\possibleCulprit^{\Att}_{\Sandworm}, \underline{b \lap_{\usedMalware} \Sans}$ & \Ax
\\[4mm]\hline}}
\caption{Derivation of ${\psi_4}, \psi_5, \psi_6, \psi_7, \delta_4, \delta_5, \delta_6 \to \FE: t\cdot\possibleCulprit^{\Att}_{\Sandworm}$}\label{fig:der8}
\end{figure}
\noindent
Finally, it is not difficult to conclude
$$
\Blog: t\cdot\possibleCulprit^{\Att}_{\Russia} 
$$
from 
\begin{displaymath}\small
\renewcommand\arraystretch{1.3}
\begin{array}{l}
\Blog:: ({-}t\cdot\econLoss^{\Att}_{\Victim}, 
t\cdot\polConflict^{\Russia}_{\Victim},
t\cdot\geoSourceIP^{\Russia}_{\Att} /
t\cdot\possibleCulprit^{\Att}_{\Russia})  \\

\FE : t\cdot\geoSourceIP^{\Russia}_{\Att}\\

\BBC: t\cdot\polConflict^{\Russia}_{\Victim} \\

\Sans:: (t \cdot\limitedDamage^{\Att}_{\Victim} / {-} t \cdot\econLoss^{\Att}_{\Victim})\\

\forall x. \, x \lap_{\econLoss} \Sans\\

\forall x. \, x \lap_{\geoSourceIP} \FE\\

\forall x. \, x \lap_{\polConflict} \BBC
\end{array}
\end{displaymath}
and some of the above results. However, thanks to the trust relation 
\begin{displaymath}
\Blog \lap_{\possibleCulprit} \FE\,,
\end{displaymath}
which says that the analyst trusts $\FE$ more than $\Blog$ for what concerns the possible culprit of the attack, the analyst's final result will discard $\Blog$'s statement, and the analyst will conclude that the possible culprit of the attack is $\Sandworm$.

\subsection{Summary}

In this case study, one of the forensics analyst's conclusions is that a \emph{possible culprit} of the attack is the \emph{group of hackers} denoted by $\Sandworm$. This conclusion was derived using the analyst's relation of trust,
where the analyst trusts the security company $\FE$ more than $\Blog$. Given the set of evidence collected, our Time-Stamped Claim Logic allows the analyst to filter the conflicting evidence by using his relations of trust. This reduces the number of evidence that the analyst needs to consider during the analysis phase. 
Furthermore, the set of evidence that the analyst obtains in this way is consistent, as all the conflicting pieces of evidence are removed (assuming that the analyst has provided all the required trust relations). 

Our Time-Stamped Claim Logic provides to the forensics analyst with a sound and complete means to \emph{reduce the size} of the collected set of evidence and \emph{remove inconsistencies}, i.e., the logic ensures that the result is consistent with respect to the trust relations that the analyst considers to hold. 
In other words, the logic
uses the evidence and the derivation of evidence of trusted sources together with the analyst's relations of trust to remove the inconstancies and provide to the analyst only evidence that is trustworthy (from his point of view). Furthermore, the logic is able to provide to the analyst \emph{new evidence} derived using the given set of evidence. The derived evidence can be general evidence that is always true or statements that sources have made and that the analyst trusts.

\section{Related Work}
\label{sec:Comparison}

In this section, we discuss the logics and approaches that are most closely related to our Time-Stamped Claim Logic.

We begin by pointing out that there are a number of recent works on reasoning about claims from different logical points of view, namely \emph{Justification Logics} (see~\cite{art:08,fit:08,buc:11}), which capitalizes on the \emph{Logic of Proofs} (see~\cite{art:01}), \emph{Modal Evidence Logic} (see~\cite{ben:11a,ben:12,bal:14}) and a logical account of \emph{Formal Argumentation} (see~\cite{gab:09}). A recent work presents a paraconsistent logic able to reason about preservation of evidence and of truth (see~\cite{wac:ar:17}).

In \emph{Justification Logics}~\cite{art:08,fit:08,buc:11} propositional formulas are labeled with justification terms built from variables and constants using several operations. 
The semantics is provided by Kripke structures enriched with an evidence function that associates a set of worlds to  each propositional formula and justification term. Justifications in this logic are seen as the strongest form of evidence providing a direct proof of the truth of an assertion. 

More specifically, Justification Logics
are enrichments of classical propositional logic with assertions of the form
$$
s:F\,,
$$
where $s$ is a term and $F$ is a propositional formula, meaning that $s$ is a justification for $F$. The operations allowed on the labels express the possible interactions that justifications may have. The logic we propose differs significantly from Justification Logics since our main goal is to address distributed time-stamped claims instead of justifications. Reflecting this different goal, the labels in the Time-Stamped Claim Logic are an agent and/or a time-stamp. Furthermore, our logic includes explicit constructions for expressing trust relations between agents with respect to relevant evidence. On the other hand, Justification Logics may allow nesting of labels which our logic currently does not support. Both Justification Logics and Time-Stamped Claim Logic allow reasoning about labeled formulas and formulas without labels. However, the relationship between these formulas is made clear at the meta-level in Justification Logics, whereas in the calculus of Time-Stamped Claim Logic this relationship is internalized.
 
The \emph{Modal Evidence Logic}~\cite{ben:11a,ben:12,bal:14} provides a way to model epistemic agents when dealing with evidence from different sources. It is a multi-modal logic that comprises three modal like operators: one for representing that an agent has evidence for a formula, another for stating that an agent believes a formula to be true and a last one asserting that a formula is true in every world.
Neighborhood semantics is adopted. 

An important difference between Modal Evidence Logic and our Time-Stamped Claim Logic is that Modal Evidence Logic is not a labeled logic. Consequently, no explicit reasoning about agents is possible in Modal Evidence Logic contrarily to what happens in our logic. Moreover, in Modal Evidence Logic there is no specific mechanism for dealing with time-stamped claims. 

\emph{Formal argumentation} has been investigated since the early 1990s but it was the landmark work in~\cite{dun:95} that started an abstract perspective on argumentation. A logic foundation of formal argumentation was proposed in~\cite{gab:09} using classical logic as well as modal logic to characterize different extensions for argumentation. 
The main difference between that work 
and Time-Stamped Claim Logic is that the former is related to arguments and the latter is concerned with agent claims. So, naturally, there is an attack relation between arguments in Formal Argumentation, whereas there is a trustful relationship between agents with respect to a given subject in Time-Stamped Claim Logic. In both Formal Argumentation and Time-Stamped Claim Logic, there are three possibilities for the valuation of both an argument and a claim. Hence, both have three-valued interpretations. In Formal Argumentation, an argument can be explicitly accepted (truth value {\em in}), explicitly rejected (truth value {\em out}) or neither explicitly accepted nor explicitly rejected (truth value {\em undec}). In our logic, either an agent $d$ claims that $p$ occurred at time $n$  (truth value $1$), or an agent $d$ claims that $p$ did not occur at time $n$  (truth value $0$) or an agent $d$ does not have an opinion about the occurrence  of $p$ at time $n$  (truth value $\frac12$).
 
\emph{Digital forensics} techniques suffer from quantity and complexity problems~\cite{Beebe} due to the necessity to handle an enormous amount of pieces of evidence and to the low format of the collected evidence. In this work, we proposed a solution that deals with the quantity problem: Time-Stamped Claim Logic reduces drastically the number of given evidence using the relations of trust of the analyst. 
Our logic allows the forensics team to decrease the resources spent on the analysis, which is usually a human-based process, and to arrive to a swift conclusion. Filtering the evidence reduces the chances of human error as the analyst needs to deal with a considerably lower number of pieces of evidence during the analysis phase. 

The gathered pieces of evidence can be in conflict with each other because they are collected from difference sources and because of the anti-forensics~\cite{goutam15,wheeler03} and deception techniques~\cite{Almeshekah2014,Almeshekah2016} used by the attackers to hide their traces. Digital forensics techniques are able to deal with conflicting information~\cite{aziz,fontani,hu,KarafiliWKL18}, but without taking into account the analyst's relations of trust.  
Our logic solves the conflicts by using in the derivations only trusted pieces of evidence, thus providing as conclusions only consistent sets of pieces of evidence. 

Filtering the evidence of the attack using the Time-Stamped Claim Logic helps the analyst to swiftly identify who has performed the attack with a higher level of confidence. In particular, the results of the filtering process can be used by the analyst during the \emph{attribution phase}~\cite{KarafiliWKL18}: the analyst can then deal only with trusted evidence and has a higher level of confidence for the given result. Using this filtering process would reduce the resources and time spent during the attribution phase, which is a crucial aspect of digital forensics investigations.

In~\cite{kar:vig:18}, the authors have proposed another formal solution for filtering the evidence of cyber-attacks. They introduced the \emph{Evidence Logic (EL)}, which is based on an enriched Linear Temporal Logic and represents the evidence (simple and derived), the sources, the reasonings, the instants of time, and the analyst's relations of trust with respect to the sources and the used reasoning. EL and Time-Stamped Claim Logic obviously share some similarities, starting with the objective of providing security analysts with the formal means to reduce the size of the collected set of evidence and remove inconsistencies. There are also some fundamental differences, though. Most notably, EL is based on a monotonic-reasoning procedure that rewrites the pieces of evidence with the use of tableau rules, but the rewriting system and algorithm that govern the inferences lack a completeness proof. There are also differences in the expressive power of the two logics. In particular, EL has a restrictive constraint, where an evidence is either given or derived, whereas the Time-Stamped Claim Logic is more expressive, as it permits one to represent the same evidence as given and derived (for instance, in the case study in Section~\ref{sec:case-study}, the evidence $\econLoss$ is both given and derived). The Time-Stamped Claim Logic permits one to better represent real evidence that can be both given and derived by other evidence. Moreover, the Time-Stamped Claim Logic uses in its reasonings only trusted evidence, whereas EL has a more \emph{credulous} nature, where it considers in its reasoning also the simple evidence used by the sources to derive new evidence. 

There are differences also at the level of trust relations. The trust relations in EL are binary relations between two different sources, whereas the Time-Stamped Claim Logic has also \emph{universal} trust relations, where a certain source is the most trusted for a particular evidence ($\forall x. \, x \lap_p a$ and $a: \lnecb \phi$). 
EL is able to represent the relation $\forall x. \, x \lap_p a$ by introducing all the needed binary relations of trust (if there are a finite number of sources), but in case new evidence is added, it needs to add new trust relations as well. However, EL is not able to represent the $a: \lnecb \phi$ relation.  

EL represents the \emph{reasonings} followed by the source to derive an evidence, as well as the trust relations between reasonings. The reasonings can be applied by different sources to derive the same conclusion, using the same premises. The Time-Stamped Claim Logic does not represent the reasonings or their trust relations, instead, it uses the trust relations between sources. 
The use of the reasonings and their trust relations in EL permits it to be more generic, whereas the use of trust relations between sources for the derived evidence gives a higher level of specificity to the Time-Stamped Claim Logic. 

Finally, it is interesting to note that, for all these reasons, EL is not able to handle the case study that we considered in this paper in the same way as the Time-Stamped Claim Logic. In order to allow an analyst to derive the same conclusions we obtained here, EL would need to undergo some changes which would come at the cost of losing expressivity.

\section{Concluding Remarks and Future Work}\label{sec:concrems}

We proposed a logic for reasoning about time-stamped claims by presenting a Gentzen calculus and a semantics, and proved soundness and completeness. We also showed how it can be applied concretely to a case study in cyber-security attribution. There are several directions for future work. The first is to provide automated support for our calculus (e.g., by implementing it in a logical framework such as Isabelle or as a tableaux system) in order to avoid having to carry out the derivations by hand. We also intend to investigate whether the logic is decidable~\cite{rss:15} and whether the logic has  Craig interpolation~\cite{srs:13}. Moreover, it would be interesting to generalize the language of the logic in order to address other general claims as well as to enrich the logic with dynamic features~\cite{dit:08}. It also seems natural and worthwhile to
incorporate probabilistic primitives and reasoning in the logic~\cite{lur:18,kok:16,srs:17,hal:06}. Finally, it seems worthwhile to investigate the combination of this logic with logics addressing different aspects in order to obtain a logic with a broader applicability~\cite{gab:09a,ras:02}.

\section*{Acknowledgments}

João Rasga and Cristina Sernadas  deeply acknowledge  the National Funding from FCT (Fundação para a Ciência e a Tecnologia) under the project UID/MAT/04561/2019 granted to CMAFcIO (Centro de Matemática, Aplicações Fundamentais e Investigação Operacional) of Universidade de Lisboa. 
Erisa Karafili was supported by the European Union's H2020 research and innovation programme under the Marie Curie grant agreement No.~746667.

\bibliographystyle{plain}
\bibliography{evl02}

\appendix
\section{Some Proofs of Sections~\ref{sec:soundness} and~\ref{sec:completeness}}

\begin{prop}[a.k.a.~Proposition~\ref{prop:rulessound}]
The rules of the sequent calculus are sound.
\end{prop}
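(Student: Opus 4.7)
The plan is a straightforward case analysis: for each rule of the calculus, fix an interpretation structure $I$ and an assignment $\rho$ and verify the corresponding satisfaction condition from the definition preceding the proposition. The rules split into three groups according to structural complexity.

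First, for the axioms ($\Ax$, $\ER$, $\ES$, $\ET$, $\EC$, $\rAT$, $\rAI$, $\rAC$) the task reduces to checking that the conclusion is valid at $(I,\rho)$. The rule $\Ax$ is immediate since the principal formula appears on both sides. The rules $\ER$, $\ES$, $\ET$ follow directly from reflexivity, symmetry and transitivity of $\cong^I$; likewise $\rAT$ and $\rAI$ from the analogous properties of $\lap_p^I$. For $\EC$ and $\rAC$ I would invoke the two closure clauses of $V$ in the definition of interpretation structure: respectively $V(p,d,n)=V(p,d,n')$ whenever $n\cong^I n'$, and $V(p,d,n)=V(p,d',n)$ whenever $d\lap_p^I d'$ and $d'\lap_p^I d$.

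Second, for the rules with premises but no fresh variable ($\Cut$, $\rAML$, $\rNB$, $\rNK$, $\rBP$, $\rKP$, $\rBCL$, $\rBKL$, $\CR 1$, $\CR 2$) I would assume the premises hold at $(I,\rho)$ and derive the conclusion. $\Cut$ is standard. For $\rNB$, $I\rho\sat a:\phi$ and $I\rho\sat a:{-}\phi$ cannot hold simultaneously because $V$ is a function, so the premise forces a formula of $\Delta$ to be satisfied; $\rNK$ is analogous at the unlabeled level, using that the existential clause of $I\rho\sat\phi$ requires a witness that cannot coexist with a witness for $I\rho\sat{-}\phi$. $\rBP$ and $\rKP$ exploit the coherence clause of $V$ that forces $V(p,d,m)=0$ for $m\not\cong^I n$ when $V(p,d,n)=1$, together with symmetry of $\cong^I$. $\rBCL$ unfolds the semantics of $a:\lnecb\phi$ into the two disjuncts ``$a'$ is not $\lap_p$-above $a$'' or ``$a'$ does not state ${-}\phi$'', each handled by one premise. $\rBKL$ is the left analogue for the universal clause in the semantics of $\phi$. $\rAML$, $\CR 1$ and $\CR 2$ are routine from the semantics of the relevant assertions.

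Third, the rules with a fresh variable ($\rAMR$, $\rBCR$, $\rBKR$, $\rKL$) exploit the quantification over all $\rho'\equiv^A_b\rho$ in the definition of rule satisfaction; the fresh variable plays the role of the universal or existential witness in the corresponding semantic clause. For $\rAMR$, varying $\rho'(b)$ over $D_A$ directly yields the universally quantified assertion $\forall x.\, x\lap_p a$. For $\rBCR$ the same variation supplies the universal quantification over candidate rebutters in the definition of $a:\lnecb\phi$. For $\rKL$ the fresh $b$ supplies the existential witness from the second clause of $I\rho\sat\phi$. I expect $\rBKR$ to be the main obstacle, because the semantics of $I\rho\sat\phi$ is the conjunction of a universal clause (no agent claims ${-}\phi$ while itself being unrebutted) and an existential clause (some agent both states $\phi$ and is unrebutted), and the three premises must jointly supply both. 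I would argue by contraposition: assuming $I\rho$ satisfies $\Gamma$ but no formula of $\Delta$ nor $\phi$, premise 1 evaluated at every $\rho'\equiv^A_b\rho$ gives the universal clause (since $\Gamma$ and $\Delta$ do not mention the fresh $b$, their satisfaction transfers between $\rho$ and $\rho'$, so premise 1 forces $I\rho'\sat b:{-}\phi$ to imply $I\rho'\not\sat b:\lnecb{-}\phi$), while premises 2 and 3 evaluated at the unchanged $\rho$ force $I\rho\sat a:\phi$ and $I\rho\sat a:\lnecb\phi$, supplying the existential witness with $\rho'=\rho$; together these force $I\rho\sat\phi$, a contradiction.
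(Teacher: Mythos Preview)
Your proposal is correct and follows essentially the same approach as the paper: a case-by-case verification grouped by whether the rule has no premises, premises without fresh variables, or premises with a fresh variable, invoking exactly the semantic clauses you name (the closure conditions on $V$ for $\EC$, $\rAC$, $\rBP$, $\rKP$; the function property of $V$ for $\rNB$; the two conjuncts in the semantics of $\phi$ for $\rBKR$, $\rKL$, $\rBKL$, $\rNK$). Your contraposition argument for $\rBKR$ is precisely the paper's, including the observation that the fresh $b$ does not occur in $\Gamma,\Delta$ so satisfaction of the context transfers along $\rho'\equiv^A_b\rho$, and that premises~2 and~3 at the unchanged $\rho$ supply the existential witness.
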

\begin{proof} 
It is quite straightforward to see that the rules $\Ax$, $\Cut$, $\ER$, $\ES$ and $\ET$ are sound.
\\[2mm]
$(\EC)$ Assume that $I \rho \sat t_1 \cong t_2$, $I\rho \sat [\psi]^{t_1}_{t_2}$ and $I \rho \sat \Gamma$. 
When $\var_T(\psi)$ is not $t_1$ then $\psi$ is  $[\psi]^{t_1}_{t_2}$ and so the thesis follows immediately. 
Otherwise, if $\psi$ is $a: \phi$, then the result is immediate by definition of $I$.
\\[2mm]
$(\rAT)$ Assume that $I \rho \sat  a_1 \lap_{p}a_2$,  $I \rho \sat a_2 \lap_{p} a_3$ and $I \rho \sat \Gamma$. The thesis follows immediately by the definition of interpretation structure since $\lap_p^I$ is transitive.
\\[2mm]
$(\rAI)$ Observe that $I \rho \sat a \lap_{p}a$ since $\lap_{p}^I$ is reflexive. 
 \\[2mm]
$(\rAC)$ Assume that $I \rho \sat a_1 \lap_p a_2$, $I \rho \sat a_2 \lap_p a_1$, $I\rho \sat [a:\phi]^{a_1}_{a_2}$ and $I \rho \sat \Gamma$. When $a$ is not $a_1$ then $[a:\phi]^{a_1}_{a_2}$ is  $a:\phi$ and so the thesis follows immediately. Otherwise, the result is obtained by the definition of $I$. \\[2mm]
$(\rAMR)$ Assume that $I \rho' \sat \Gamma \to \Delta, b \lap_p a$ for every $\rho'$ such that  $\rho' \equiv^A_b \rho$ and $I \rho \sat \Gamma$. Suppose that there is no $\delta \in \Delta$ such that $I \rho \sat \delta$. Then $I \rho' \sat b \lap_p a$ for every $\rho'$ such that  $\rho' \equiv^A_b \rho$. Hence, $d \lap_p^{I} \rho_a(a)$ for each $d \in D_A$, and so $I \rho \sat \forall x. \, x \lap_p a$. \\[2mm]
$(\rAML)$ Assume that $I \rho \sat a' \lap_p a, \forall x. \, x \lap_p a, \Gamma \to \Delta$, $I \rho \sat \forall x. \, x \lap_p a$ and $I \rho \sat \Gamma$. Then $d \lap_p^I \rho_A(a)$ for every $d \in D_A$. Hence, in particular, $\rho_A(a') \lap_p^I \rho_A(a)$ and so $I \rho \sat a' \lap_p a$.
Therefore, by the hypothesis, there is $\delta \in \Delta$ such that $I \rho \sat \delta$.\\[2mm]
$(\rNB)$ Assume that $I \rho \sat a: {-}\phi,\Gamma \to \Delta,a:\phi$,  $I \rho \sat \Gamma$ and  $I \rho \sat a: {-} \phi$. 
We just consider the case that $\phi$ is $t\cdot p$. Thus $V(p,\rho_A(a),\rho_T(t))=0$. Therefore, $I \rho  \not \sat  a:\phi$ and so, by the hypothesis, 
there is $\delta \in \Delta$ such that $I \rho \sat \delta$. \\[2mm]
$(\rNK)$ Assume that $I \rho \sat {-}\phi,\Gamma \to \Delta,\phi$,  $I \rho \sat \Gamma$ and  $I \rho \sat {-} \phi$. Then 
there is an assignment $\rho'$ such that $\rho' \equiv^A_a \rho$,  $I \rho' \sat a : {-} \phi$ and $I \rho'  \sat a : \lnecb {-}\phi$.
Thus we can conclude immediately that $I \rho \not \sat \phi$. Therefore, by the hypothesis, there is $\delta \in \Delta$ such that $I \rho \sat \delta$.\\[2mm]
$(\rBP)$ Assume that $I \rho \sat \Gamma \to \Delta, t_1 \cong t_2, a: {-}(t_1 \cdot p), a: t_2 \cdot p$ and $I \rho \sat \Gamma$. 
Assume, by contradiction, that  there is  no $\psi \in \Delta \cup \{t_1 \cong t_2, a: {-}(t_1 \cdot p)\}$ such that $I \rho \sat \psi$. Then, $I \rho \sat a: t_2 \cdot p$, that is, $V(p,\rho_A(a),\rho_T(t_2))=1$. If $I \rho \sat t_1 \cong t_2$ the thesis follows immediately.  Otherwise, $\rho_T(t_1) \not \cong^I \rho_T(t_2)$. Then, by definition of $V$, $V(p,\rho_A(a),\rho_T(t_1))=0$. So
$I \rho \sat a: {-}(t_1 \cdot p)$ contradicting the assumption.\\[2mm]
$(\rKP)$ Assume that $I \rho \sat \Gamma \to \Delta, t_1 \cong t_2, {-}  (t_1 \cdot p), t_2 \cdot p$ and $I \rho \sat \Gamma$.  Assume, by contradiction that there is no 
$\psi \in \Delta \cup \{t_1 \cong t_2,{-}  (t_1 \cdot p)\}$ such that $I \rho \sat \psi$. Then, $I \rho \sat t_2 \cdot p$. We now show that $I \rho \sat {-}  (t_1 \cdot p)$ which leads to a contradiction. 
Let $\rho'$ be such that $\rho' \equiv^A_b \rho$ and  $I \rho' \sat b:t_1 \cdot p$. Then 
$V(p,\rho'_A(b),\rho_T(t_1))=1$ and so $V(p,\rho'_A(b),\rho_T(t_2))=0$ since $\rho_T(t_1) \not \cong^I \rho_T(t_2)$.
Hence, $I \rho' \sat b:{-} (t_2 \cdot p)$. Then $I \rho' \not \sat b: \lnecb {-}  (t_2 \cdot p)$ by definition of $I \rho \sat t_2 \cdot p$.
Therefore, there is $\rho'' \equiv^A_{b'} \rho'$ such that $I \rho'' \sat b \lap_p b'$ and $I \rho'' \sat b': t_2\cdot p$. Then 
$V(p,\rho''_A(b'),\rho_T(t_2))=1$ and so $V(p,\rho''_A(b'),\rho_T(t_1))=0$. Thus, $I \rho'' \sat b':{-} (t_1 \cdot p)$ and so
$I \rho' \not \sat b: \lnecb  t_1 \cdot p$. With respect to the other condition, since $I \rho \sat t_2 \cdot p$, 
let $\rho' \equiv^A_a \rho$ be such that $I \rho' \sat a:t_2 \cdot p$ and $I \rho' \sat a:\lnecb t_2 \cdot p$. Thus
$I \rho' \sat a:{-}(t_1 \cdot p)$. It remains to show that $I \rho' \sat a: \lnecb {-} (t_1\cdot p)$.
Let $\rho'' \equiv^A_{b} \rho'$ be such that $I \rho'' \sat a \lap_p b$.  
Thus,
$I \rho'' \not \sat b: {-}(t_2 \cdot p)$. 
Hence, $I \rho'' \not \sat b: t_1 \cdot p$. 
\\[2mm]
$(\rBCR)$ Assume that $I \rho' \sat b:{-} \phi, a \lap_{\var_P(\phi)} b, \Gamma \to \Delta$ for every $\rho'$ such that $\rho' \equiv^A_b \rho$, $I \rho \sat \Gamma$ and that there is no $\delta \in \Delta$ such that $I \rho \sat \delta$. 
We now show that $I \rho \sat a: \lnecb \phi$. 
Let $\sigma$ be such that 
$\sigma \equiv^A_{b} \rho$ and $I \sigma  \sat a \lap_{\var_P(\phi)} b$. Observe that $I \sigma \sat b:\phi, a \lap_{\var_P(\phi)} b, \Gamma \to \Delta$, $I \sigma \sat \Gamma$ and there is no $\delta \in \Delta$ such that $I \sigma \sat \delta$. Then $I \sigma \not \sat b: {-} \phi$. 
 \\[2mm]
$(\rBCL)$ Assume that $I \rho \sat  a: \lnecb \phi,\Gamma \to \Delta, a \lap_{\var_P(\phi)} a'$, 
$I \rho \sat  a: \lnecb \phi,\Gamma \to \Delta, a': {-} \phi$, $I \rho \sat \Gamma$ and $I \rho \sat a: \lnecb \phi$. Suppose, by contradiction, that there is no $\delta \in \Delta$ such that $I \rho \sat \delta$. Then $I \rho \sat a \lap_{\var_P(\phi)} a'$ and $I \rho \sat a':{-}\phi$. 
These facts contradict the hypothesis that $I \rho \sat a: \lnecb \phi$. \\[2mm]
$(\rBKR)$ Assume that  $I \rho' \sat b:{-} \phi, b:\lnecb {-}\phi, \Gamma \to \Delta$ for every $\rho' \equiv^A_b \rho$, $I \rho \sat \Gamma \to \Delta, \phi, a:\phi$, $I \rho \sat  \Gamma \to \Delta,\phi,a:\lnecb  \phi$ and $I \rho \sat \Gamma$. Assume, by contradiction,  that there is no $\psi \in \Delta\cup \{\phi\}$ such that $I \rho \sat \psi$. Then $I \rho \sat a:\phi$ and $I \rho \sat a:\lnecb  \phi$. We now show that $I \rho \sat \phi$ leading to a contradiction. 
Let $\sigma \equiv^A_b \rho$ be such that $I \sigma \sat b:{-} \phi$. Taking $\rho'$ as $\sigma$, by hypothesis, 
$I \sigma \not \sat b:\lnecb {-}\phi$. With respect to the second condition it is enough to take $\rho'$ as $\rho$.
\\[2mm]
$(\rKL)$ Assume that $I \rho' \sat b:\phi, b:\lnecb \phi, \Gamma \to \Delta$ for every $\rho' \equiv^A_b \rho$, $I \rho \sat \Gamma$ and $I \rho \sat \phi$. Assume, by contradiction,  that  there is no $\delta \in \Delta$ such that $I \rho \sat \delta$. Then, for every $\rho' \equiv^A_b \rho$ either $I \rho' \not \sat b:\phi$ or 
$I \rho' \not \sat b:\lnecb \phi$. Since $I \rho \sat \phi$ then 
 there is $\sigma' \equiv^A_a \rho$
such that $I \sigma' \sat a:\phi$ and $I \sigma'  \sat a:\lnecb \phi$ which contradicts the facts above. \\[2mm]
$(\rBKL)$ Assume that $I \rho \sat \phi,\Gamma \to \Delta,  a:{-} \phi$, $I \rho \sat  \phi,\Gamma \to \Delta, a:\lnecb {-}\phi$,
$ \rho \sat \Gamma$ and $I \rho \sat \phi$. Assume, by contradiction,  that  there is no $\delta \in \Delta$ such that $I \rho \sat \delta$. Thus $I \rho \sat a:{-} \phi$ and $I \rho \sat a:\lnecb {-}\phi$ contradicting the fact that $I \rho \sat \phi$.\\[2mm]
$(\CR1)$ Assume that $I \rho \sat \phi_1,\, \dots, \, \phi_n ,\Gamma \to \Delta, a::\phi$ and $I \rho \sat \Gamma$. 
If there is no $\delta \in \Delta$ such that $I \rho \sat \delta$ then $I \rho \sat a:\phi$ and so the thesis follows by definition.\\[2mm]
$(\CR2)$ Assume that $I \rho \sat\Gamma \to \Delta, \phi_i$ for $i=1,\dots,n$, $I \rho \sat a : \phi,\Gamma \to \Delta$,  $I \rho \sat \Gamma$ and $I \rho \sat  a ::(\phi_1,\, \dots, \, \phi_n /  \phi)$. Suppose, by contradiction, that  there is no $\delta \in \Delta$ such that $I \rho \sat \delta$. Hence, $I \rho \not \sat a : \phi$.  Moreover, $I \rho \sat  \phi_i$ for $i=1,\dots,n$. So, by hypothesis, $I \rho \sat a : \phi$ which is a contradiction.
\end{proof}

\begin{prop}[a.k.a.~Proposition~\ref{prop:HKsat}]
Let $H=(\Psi_L,\Psi_R)$ be a Hintikka pair over $P'$, $X'_A$ and $X'_T$. Then there are an interpretation structure $I_H$  over $P'$ and an assignment $\rho$ over $I_H$ such that $I_H \rho \sat \psi$ for each $\psi \in \Psi_L$ and $I_H \rho \not \sat \psi$
for each $\psi \in \Psi_R$. 
\end{prop}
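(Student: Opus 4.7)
The plan is a canonical/term-model construction, in the standard Hintikka style adapted to our three-valued setting. I would take $D_T$ to be the set of equivalence classes of $X'_T$ under the relation $\{(t_1,t_2) : t_1\cong t_2 \in \Psi_L\}$ (reflexivity, symmetry, transitivity follow from the corresponding Hintikka conditions), and $D_A$ to be the set of equivalence classes of $X'_A$ under the relation $\{(a_1,a_2)\mid a_1 \lap_p a_2, a_2 \lap_p a_1 \in \Psi_L \text{ for all } p \in P'\}$. Then define $\cong^I$ as the identity on $D_T$, $\lap_p^I$ as the relation induced by $\{(a_1,a_2)\mid a_1 \lap_p a_2 \in \Psi_L\}$ (transitivity and reflexivity again coming from Hintikka), and $\rho_A$, $\rho_T$ as the canonical projections. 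The valuation is the natural one: $V(p,[a],[t])=1$ if $a\!:\!t\cdot p \in \Psi_L$, $V(p,[a],[t])=0$ if $a\!:\!-(t\cdot p) \in \Psi_L$, and $V(p,[a],[t])=\tfrac12$ otherwise.

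The first technical hurdle is to show $V$ is well-defined: consistency of the two clauses follows from the disjointness condition and the congruence clauses for $\cong$ and $\lap_p$ in the Hintikka pair. The conditions required of $V$ by the definition of interpretation structure then break down into three checks: compatibility with $\cong^I$ (from the two congruence clauses involving $[a\!:\!\phi]^{t_1}_{t_2}$), compatibility with mutual $\lap_p^I$-preference (from the clause for $[a\!:\!\phi]^{a_1}_{a_2}$), and the clause forcing $V(p,d,m)=0$ at all $m\not\cong^I n$ when $V(p,d,n)=1$ (which follows from the Hintikka condition relating $t_1 \cong t_2$, $a\!:\!-(t_1\cdot p)$, and $a\!:\!t_2\cdot p$ on the right-hand side, combined with the basic constraint that each $a\!:\!\phi$ lies in $\Psi_L$ or $\Psi_R$).

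Next, I would prove by cases on the shape of $\psi$ that $I_H \rho \sat \psi$ for every $\psi \in \Psi_L$ and $I_H \rho \not\sat \psi$ for every $\psi \in \Psi_R$. For the atomic shapes $t_1\cong t_2$, $a_1\lap_p a_2$, and $a\!:\!\phi$, the claim is immediate from the definitions and the disjointness condition. For the universal clauses $\forall x.\, x\lap_p a$, the Hintikka clauses (``if $\forall x.\, x\lap_p a \in \Psi_L$ then $a_1\lap_p a \in \Psi_L$ for each $a_1 \in X'_A$'' and its $\Psi_R$ witness counterpart) yield exactly what is needed, using that the domain $D_A$ is generated by $X'_A$. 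The derived-evidence clauses $a::(\phi_1,\dots,\phi_n/\phi)$ reduce to the inductive hypotheses on the $\phi_i$ and $a\!:\!\phi$.

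The main obstacle, as expected, will be the clauses for $a\!:\!\lnecb\phi$ and for bare $\phi$, since their semantics involves quantification over all $\rho'\equiv^A_b \rho$, hence over all elements of $D_A$, whereas the Hintikka conditions only range over $a'\in X'_A$. This gap is exactly why $D_A$ is built from $X'_A$: every element of $D_A$ is $\rho_A(a')$ for some $a'\in X'_A$, so quantifying over witnesses in $D_A$ is equivalent to quantifying over $X'_A$. With that observation the $\Psi_R$ direction for $a\!:\!\lnecb\phi$ uses the existential Hintikka witness directly, the $\Psi_L$ direction uses the universal Hintikka clause and the definition of $\lap_p^I$, and the two cases for bare $\phi$ combine both the witness-based and the ``no refuting witness'' Hintikka conditions together with the inductive hypothesis on $a\!:\!\phi$ and $a\!:\!\lnecb\phi$. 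A small extra care is needed to handle the $-\phi$ vs $\phi$ notational convention, using the double-negation shorthand ${-}{-}(t\cdot p)=t\cdot p$ to keep the case split finite.
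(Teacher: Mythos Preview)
Your overall strategy---build a term model from the Hintikka pair and verify satisfaction by case analysis on the shape of $\psi$---is exactly what the paper does, and your identification of the crux (that the semantic quantifiers over $D_A$ reduce to syntactic quantifiers over $X'_A$ because $\rho_A$ is surjective) is the right one.

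The one genuine difference is in the construction of the domains. The paper takes the simpler route: $D_A = X'_A$ and $D_T = X'_T$ \emph{without quotienting}, with $\rho_A,\rho_T$ the identity maps, and with $\cong^{I_H}$ and $\lap_p^{I_H}$ read directly off $\Psi_L$. Your quotient by $\cong$ on the time side is harmless (it just folds condition~2 on $V$ into well-definedness), but your quotient on the agent side---by ``mutually $\lap_p$-related for \emph{all} $p\in P'$''---buys nothing: the interpretation-structure constraint $V(p,d,n)=V(p,d',n)$ is required whenever $d,d'$ are mutually $\lap_p^I$-related for that \emph{single} $p$, which is strictly coarser than your quotient, so you still have to invoke the $[a{:}\phi]^{a_1}_{a_2}$ Hintikka clause exactly as in the unquotiented version. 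In short, the paper's non-quotiented construction avoids the well-definedness overhead while requiring the same Hintikka-clause checks; your version works but is a detour.
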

\begin{proof} Let $H=(\Psi_L,\Psi_R)$ be a Hintikka pair. 
Consider the interpretation structure $I_{H}$ over $P'$ defined as follows:
\begin{itemize}
\item the domain $D_A$ is $X'_A$, that is, $\var_A(\Psi_L\cup \Psi_R)$,
\item the domain $D_T$ is $X'_T$, that is, $\var_T(\Psi_L\cup \Psi_R)$,
\item $(t_1,t_2)  \in \; \cong^{I_H} \quad \text{iff} \quad t_1 \cong t_2 \in  \Psi_L$,
\item $(a_1,a_2) \in \lap_p^{I_H} \quad \text{iff} \quad a_1 \lap_{p} a_2 \in  \Psi_L$,
\item $V(p,a,t) = \begin{cases}
1 & \textrm{if } a: t \cdot p \in \Psi_L
\\[1mm]
0 & \textrm{if } a: {-} (t \cdot p) \in  \Psi_L\\[1mm]
\frac 12 & \textrm{otherwise.} 
\end{cases}$

\end{itemize}
Consider also an assignment $\rho$ such that $\rho_A$  and $\rho_T$ are identities.
We show that for every $\psi\in L_{P'}^{X'_A,X'_T}$ we have 
$$
\psi \in \Psi_L   \textrm{ implies }  I_H\rho\sat \psi \quad \text{and} \quad  \psi \in \Psi_R   \textrm{ implies }  I_H\rho\not\sat \psi\,.
$$ 
\begin{enumerate}
\item $\psi$ is $t_1 \cong t_2$. 
\begin{enumerate}
\item Assume that $t_1 \cong t_2 \in \Psi_L$. Then  $(t_1,t_2) \in \; \cong^{I_H}$ and so $I_H \rho\sat t_1 \cong t_2$.
\item Assume that $t_1 \cong t_2 \in \Psi_R$. Then  $(t_1,t_2) \in \; \cong^{I_H}$ and so $I_H \rho\not \sat t_1 \cong t_2$.
\end{enumerate}
\item $\psi$ is $a_1 \lap_p a_2$. 
\begin{enumerate}
\item Assume that $a_1 \lap_p a_2 \in \Psi_L$. Then 
$(a_1,a_2) \in \lap_p^{I_H}$ and so $I_H \rho\sat a_1 \lap_p a_2$. 
\item Assume that $a_1 \lap_p a_2 \in \Psi_R$. Then 
$(a_1,a_2) \notin \lap_p^{I_H}$ and so $I_H \rho\not \sat a_1 \lap_p a_2$. 
\end{enumerate}
\item $\psi$ is $\forall x. \, x \lap_p a$.  
\begin{enumerate}
\item Assume that $\forall x. \, x \lap_p a \in \Psi_L$. Then for each $a_1\in X'_A$ we have $a_1 \lap_p a \in  \Psi_L$. 
Hence, $I_H \rho \sat a_1 \lap_p a$ for each $a_1\in D_A$ and so $I \rho \sat \forall x. \, x \lap_p a$. 
\item Assume that $\forall x. \, x \lap_p a \in \Psi_R$. Then there is $a'\in X'_A$ such that $a' \lap_p a \in  \Psi_R$. Therefore, $I \rho \not \sat a' \lap_p$ and so $I \rho \not \sat \forall x. \, x \lap_p a$.
\end{enumerate}
\item $\psi$ is $a : t \cdot p$.
\begin{enumerate}
\item Assume that $a : t \cdot p \in \Psi_L$. Then $V(p,a,t)=1$ and $I_H \rho\sat a : t \cdot p$.
\item Assume that $a : t \cdot p \in \Psi_R$. Then $V(p,a,t) \in \{0,\frac 12\}$ and so 
$I_H \rho\not \sat a : t \cdot p$.
\end{enumerate}
\item $\psi$ is $a : {-}(t \cdot p)$.
\begin{enumerate}
\item Assume that $a : {-}(t \cdot p) \in \Psi_L$. Then  $V(p,a,t)=0$. Hence $I_H \rho\sat a : {-}(t \cdot p)$. 
\item Assume that $a : {-}(t \cdot p) \in \Psi_R$.
Then $V(p,a,t) \in \{1,\frac 12\}$ and so $I_H \rho\not \sat a : {-}(t \cdot p)$.
\end{enumerate}
\item $\psi$ is $a : \lnecb \phi$.
\begin{enumerate}
\item Assume that $a : \lnecb \phi \in \Psi_L$. Then, for each $a' \in X'_A$ either $a \lap_{\var(\phi)} a' \in \Psi_R$ or $a': {-} \phi \in \Psi_R$. Hence, by the induction hypothesis, for each $a' \in X'_A$ either $I_H \rho \not \sat a \lap_{\var(\phi)} a'$ or $I_H \rho \not \sat a': {-} \phi$. Hence for each $a' \in X'_A$ if $I_H \rho \sat a \lap_{\var(\phi)} a'$ then $I_H \rho \not \sat a': {-} \phi$. Let $\rho'$ be an assignment such that $\rho' \equiv^{A'}_b \rho$ and 
$I_H \rho' \sat a \lap_{\var(\phi)} b$. Thus $I_H \rho \sat a \lap_{\var(\phi)} \rho'(b)$. Therefore, 
$I_H \rho \not \sat \rho'(b): {-} \phi$ and so $I_H \rho' \not \sat b: {-} \phi$.
\item Assume that $a : \lnecb \phi \in \Psi_R$. Then there is $a' \in X'_A$ such that $a \lap_{\var(\phi)} a', a': {-} \phi  \in \Psi_L$. So, by the induction hypothesis, there is $a' \in X'_A$ such that $I_H \rho \sat a \lap_{\var(\phi)} a'$ and
$I_H \rho \sat a': {-} \phi$. Let $\rho' \equiv^{A'}_b \rho$ be such that $\rho'(b)=a'$. Then $I_H \rho' \sat a \lap_{\var(\phi)} b$ and
$I_H \rho' \sat b: {-} \phi$. Therefore, $I \rho \not \sat a : \lnecb \phi$. 
\end{enumerate}
\item $\psi$ is $\phi \in K_P$.
\begin{enumerate}
\item Assume that $\phi \in \Psi_L$. Then there is $a' \in X'_A$ such that $a':\phi, a':\lnecb \phi \in \Psi_L$ and for every $a \in X'_A$ either $a:{-} \phi \in \Psi_R$ or $a:\lnecb{-}  \phi \in \Psi_R$. Thus, by the induction hypothesis, there is $a' \in X'_A$ such that $I_H \rho \sat a':\phi$ and $I_H \rho \sat a':\lnecb \phi$ and for each $a \in X'_A$ either $I_H \rho \not \sat  a:{-} \phi$ or $I_H \rho \not \sat a:\lnecb{-}  \phi$. We now show that $I_H \rho \sat \phi$. Let $\rho' \equiv^{A'}_b \rho$ be such that $I_H \rho' \sat b: {-} \phi$. Then $I_H \rho \sat \rho'(b): {-} \phi$. 
Hence $I_H \rho \not \sat \rho'(b):\lnecb{-}  \phi$, by hypothesis.
Therefore $I_H \rho' \not  \sat b :\lnecb{-}  \phi$. The other  condition follows straightforwardly taking into account one of the assumptions.
\item Assume that $\phi \in \Psi_R$. Then either there exists $a' \in X'_A$ such that $a':{-} \phi, a':\lnecb {-}\phi \in \Psi_L$   or 
	for each $a \in X'_A$  either $a:\phi \in \Psi_R$ or $a:\lnecb \phi  \in \Psi_R$. Hence, by the induction hypothesis,
	  either there exists $a' \in X'_A$ such that $I_H \rho \sat a':{-} \phi$ and $I_H \rho \sat a':\lnecb {-}\phi$ or 
	for each $a \in X'_A$  either $I_H \rho \not \sat a:\phi$ or $I_H \rho \not \sat a:\lnecb \phi$. We now show that $I_H \rho \not \sat \phi$. 
	Assume that 
	for each assignment $\rho'$ such that $\rho' \equiv^A_b \rho$, if  $I_H \rho' \sat b:{-} \phi$ then $I_H \rho' \not \sat b:\lnecb {-} \phi$. Then for each $a \in X'_A$  either $I_H \rho \not \sat a:\phi$ or $I_H \rho \not \sat a:\lnecb \phi$.
	Then for each $\rho'$ 
such that $\rho' \equiv^A_a \rho$ and  $I_H \rho' \sat a : \phi$ we have $I_H \rho' \not \sat a:\lnecb \phi$. 
\end{enumerate}
\item $\psi$ is $a:: (\phi_1,\dots, \phi_n / \phi)$. 
\begin{enumerate}
\item Assume that $a:: (\phi_1,\dots, \phi_n / \phi) \in \Psi_L$.~Then  either $\phi_i\in \Psi_R$  for some $i=1,\dots,n$ or 	
$a:\phi \in \Psi_L$. Then, by the induction hypothesis,  either $I_H \rho \not \sat \phi_i$  for some $i=1,\dots,n$ or 	
$I_H \rho \sat a:\phi$. Therefore, $I_H \rho  \sat a:: (\phi_1,\dots, \phi_n / \phi)$.
\item Assume that $a:: (\phi_1,\dots, \phi_n / \phi) \in \Psi_R$. Then if $\phi_1,\dots,\phi_n \in \Phi_L$ then
$a:\phi \in \Psi_R$. So, by the induction hypothesis, if $I_H \rho \sat \phi_1,\cdots,I \rho \sat \phi_n$ then
$I_H \rho \not \sat a:\phi$. Hence, $I_H \rho \not \sat a:: (\phi_1,\dots, \phi_n / \phi)$.
\end{enumerate}
\end{enumerate}
\end{proof}

\begin{prop}[a.k.a.~Proposition~\ref{prop:Hintikkafrombranch}]
Let $\Gamma'_1 \to \Delta'_1 \cdots\Gamma'_n \to \Delta'_n$ be a rooted  open analytical exhausted branch of 
an expansion $\Gamma_1 \to \Delta_1 \ \cdots$. Then $$\left(\displaystyle \bigcup_{i=1}^n \Gamma'_i,\displaystyle \bigcup_{i=1}^n \Delta'_i\right)$$ is an Hintikka pair over $\var_P(\Gamma'_n \cup \Delta'_n)$, $\var_A(\Gamma'_n \cup \Delta'_n)$ and $\var_T(\Gamma'_n \cup \Delta'_n)$.
\end{prop}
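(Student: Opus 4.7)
The plan is to let $\Psi_L=\bigcup_{i=1}^n \Gamma'_i$ and $\Psi_R=\bigcup_{i=1}^n \Delta'_i$, set $X'_A=\var_A(\Gamma'_n\cup\Delta'_n)$, $X'_T=\var_T(\Gamma'_n\cup\Delta'_n)$ and $P'=\var_P(\Gamma'_n\cup\Delta'_n)$, and verify each defining clause of a Hintikka pair over $P'$, $X'_A$, $X'_T$ by case analysis on the clause. A preliminary observation is that, because no rule of the calculus erases variables or propositional symbols when read bottom-up, one has $\var_A(\Gamma'_i\cup\Delta'_i)\subseteq X'_A$ for each $i\le n$, and similarly for $\var_T$ and $\var_P$; hence every assertion in $\Psi_L\cup\Psi_R$ belongs to $L_{P'}^{X'_A,X'_T}$, so the Hintikka clauses do quantify over the correct universe.

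The clauses then split into three families, each handled by a uniform argument. The consistency clauses (e.g., no $\beta$ of the form $a:\phi$, $t_1\cong t_2$ or $a_1\lap_p a_2$ lies in both $\Psi_L$ and $\Psi_R$, and $t\cong t\notin\Psi_R$, and so on) follow from openness: if such a $\beta$ were on both sides, or if a reflexive/transitive/symmetric/congruent instance lived on the wrong side, then at some node the branch would already be the conclusion of a premise-free rule among \Ax, \ER, \ES, \ET, \EC, \rAT, \rAI, \rAC, closing it. The completeness clauses (each $\beta$ of relevant shape belongs to $\Psi_L$ or $\Psi_R$) follow from exhaustion: a backward application of \Cut on $\beta$ would be analytical, contradicting that the branch is exhausted. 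The closure clauses associated to each logical rule $u$ of the calculus (namely \rNB for the $a:\phi$ clause, \rBP and \rKP for the time-propagation clauses, \rBCL/\rBCR for the $a:\lnecb\phi$ clauses, \rBKL/\rBKR/\rKL for the bare $\phi\in K_P$ clauses, \rAML/\rAMR for the $\forall x$ clauses, and \CR1/\CR2 for the derived-evidence clauses) are handled alike: assuming the antecedent of a clause lies in $\Psi_L$ or $\Psi_R$, a backward application of $u$ would be analytical unless the consequent already holds, so exhaustion forces some premise of $u$ to have been selected along the branch; the analytical condition guarantees that any fresh variable introduced at such a step is added to $X'_A$ (or $X'_T$), providing the witness needed by any existential part of the Hintikka clause.

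The main obstacle will be the two clauses concerning bare $\phi\in K_P$, because their consequents are disjunctions of an existential with a universal statement. For $\phi\in\Psi_R$ the argument I plan is as follows: assume the existential alternative fails, i.e., no $a'\in X'_A$ satisfies $a':{-}\phi, a':\lnecb{-}\phi\in\Psi_L$; then for each $a\in X'_A$, a backward application of \rBKR with parameter $a$ and some fresh $b$ contributes the non-fresh right-side formulas $a:\phi$ and $a:\lnecb\phi$, so unless both already lie in $\Psi_R$ the application is analytical; exhaustion therefore forces such an application, whose branch-selected premise cannot be the first (as that would deliver the excluded existential alternative), leaving the second or third premise and yielding $a:\phi\in\Psi_R$ or $a:\lnecb\phi\in\Psi_R$, as required. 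A dual argument, combining \rKL (for the existential witness) and \rBKL (for the universal closure), handles the case $\phi\in\Psi_L$; the two $\lnecb$-clauses follow by the same schema with \rBCR and \rBCL in place of \rBKR and \rBKL, only simpler since each side carries a single quantifier. With these delicate cases dispatched, the $\forall x$ clauses (using \rAML/\rAMR) and the derived-evidence clauses (using \CR1/\CR2) reduce to routine instances of the same template.
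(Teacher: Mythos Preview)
Your overall plan coincides with the paper's: define $\Psi_L,\Psi_R$ as the unions along the branch and verify each Hintikka clause by contradiction, invoking openness for the consistency clauses and exhaustion for the closure clauses, with each clause paired to the rule (\Ax, \Cut, \rNB, \rBP, \rBCR/\rBCL, \rBKR/\rKL/\rBKL, \rAMR/\rAML, \CR1/\CR2) that would otherwise extend the branch analytically. That matching is exactly what the paper does.

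There is, however, a real error in your argument for the case $\phi\in\Psi_R$. You write that ``exhaustion therefore forces such an application, whose branch-selected premise cannot be the first \ldots\ leaving the second or third premise and yielding $a{:}\phi\in\Psi_R$ or $a{:}\lnecb\phi\in\Psi_R$''. This inverts what exhaustion says. Exhaustion does not guarantee that \rBKR\ was already applied somewhere along the branch with a particular premise selected; it says that \emph{no new} analytical step can be appended at $\Gamma'_n\to\Delta'_n$. The correct deduction is the contrapositive of your own premise: for $a\in X'_A$ and fresh $b$, an application of \rBKR\ at the leaf has as its only non-fresh side formulas $a{:}\phi$ and $a{:}\lnecb\phi$ (the formulas in premise~1 involve the fresh $b$ and do not count for condition~1 of analyticity); hence by exhaustion this application is \emph{not} analytical, which forces both $a{:}\phi\in\Psi_R$ and $a{:}\lnecb\phi\in\Psi_R$. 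No premise selection is involved. You also need, before invoking \rBKR\ at the leaf, to know that $\phi\in\Delta'_n$: the paper gets this from the failure of the existential alternative, since the only way a bare $\phi$ leaves the right along the branch is via premise~1 of an earlier \rBKR, which would already put some $b{:}{-}\phi,\,b{:}\lnecb{-}\phi$ into $\Psi_L$. A similar persistence-to-the-leaf check underlies the consistency clauses as well: openness is a property of $\Gamma'_n\to\Delta'_n$ only, so you must argue (as the paper does) that the offending axiom instance survives to the last sequent, not merely that it arises ``at some node''.
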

\begin{proof}
In the sequel, we denote by $\Psi_L$ and $\Psi_R$ the left and the right hand side of the pair. We now show that this pair fullfils the conditions to be a Hintikka pair:
\begin{enumerate}
\item 
Either $\beta \notin \Psi_L$ or $\beta \notin \Psi_R$ where $\beta$ is of the form $a: \phi$, $t_1 \cong t_2$ and $a_1 \lap_p a_2$. Assume by contradiction that $\beta \in \Psi_L$ and $\beta \in \Psi_R$. Hence, there are $1\leq i,j\leq n$ such that $\beta \in \Gamma'_i$ and $\beta \in \Delta'_j$. Observe that any rule preserves $\beta$. Therefore, $\beta \in \Gamma'_n \cap \Delta'_n$ and so $\Gamma'_n \to \Delta'_n$ is an axiom contradicting the hypothesis that the branch is open.
\item Either $\beta \in \Psi_L$ or $\beta \in \Psi_R$  where $\beta$ is of the form $a: \phi$, $t_1 \cong t_2$ and $a_1 \lap_p a_2$. Assume, by contradiction, that $\beta$ is not in $\Psi_L \cup \Psi_R$. Consider the branch $\Gamma'_1 \to \Delta'_1 \cdots\Gamma'_n \to \Delta'_n\ \Gamma'_n \to \Delta'_n, \beta$ using the cut rule in the last step. Observe that this branch is still analytical. Indeed $\beta$ does not occur in $(i_{R})_n$ and no fresh variables are involved. Hence we get a contradiction since the given branch is analytical and exhausted.
\item $t \cong t \notin \Psi_R$. Immediate by rule $(\ER)$ since the branch is open. ~Similarly for $\lap_p$
\item If $t_1 \cong t_2 \in \Psi_L$ then $t_2 \cong t_1 \notin \Psi_R$. Suppose that $t_1 \cong t_2 \in \Psi_L$. Then if 
$t_2 \cong t_1 \in \Psi_R$ we could conclude that the branch was not open using rule $(\ES)$. 
\item If $t_1 \cong t_2, t_2 \cong t_3 \in \Psi_L$ then $t_1 \cong t_3 \notin \Psi_R$ and similarly for $\lap_p$.
Suppose that $t_1 \cong t_2, t_2 \cong t_3 \in \Psi_L$. Then if 
$t_1 \cong t_3 \in \Psi_R$ we could conclude that the branch was not open using rule $(\ET)$. 
\item If $t_1 \cong t_2,[a:\phi]^{t_1}_{t_2} \in \Psi_L$ then $a:\phi \notin \Psi_R$. Suppose that $t_1 \cong t_2,[a:\phi]^{t_1}_{t_2} \in \Psi_L$. Then if $a:\phi \in \Psi_R$ we could conclude that the branch was not open using rule $(\EC)$. 
\item If $a_1 \lap_p a_2,a_2 \lap_p a_1,[a:\phi]^{a_1}_{a_2} \in \Psi_L$ then $a:\phi \notin \Psi_R$. The proof is similar to the one of (6).
\item If  $\forall x. \, x \lap_p a \in \Psi_R$ then there is  $a' \in X'_A$ such that $a'\lap_p a \in \Psi_R$. Assume that $\forall x.\, x \lap_p a \in \Psi_R$. Suppose, by contradiction, that 
 for every $a' \in X'_A$,  $a'\lap_p a \notin \Psi_R$. Then, rule ($\rAMR$) was not applied to $\forall x. \, x \lap_p a$ and so
 $\forall x. \, x \lap_p a \in \Delta'_n$. Consider the branch $\Gamma'_1 \to \Delta'_1 \cdots\Gamma'_n \to \Delta'_n \ \Gamma'_n \to (\Delta'_n \setminus \{\forall x. \, x \lap_p a\}),b\lap_p a$ using in the last step the rule $(\rAMR)$ where $b$ is not in $\Gamma'_n \cup \Delta'_n$. Observe that this branch is still analytical. So the original branch is not exhausted which contradicts the hypothesis.
\item If  $\forall x. \, x \lap_p a \in \Psi_L$ then $a_1 \lap_p a \in \Psi_L$ for each $a_1 \in X'_A$. Assume that $\forall x. \, x \lap_p a \in \Psi_L$. Suppose, by contradiction, that there is $a_1 \in X'_A$ such that $a_1 \lap_{p} a \notin \Psi_L$. Observe that $\forall x. \, x \lap_p a$ is in $\Gamma'_n$. Consider the branch $\Gamma'_1 \to \Delta'_1 \cdots\Gamma'_n \to \Delta'_n\ a_1 \lap_{p} a,\Gamma'_n \to \Delta'_n$ using in the last step the rule $(\rAML)$. Observe that this branch is still analytical. So the original branch is not exhausted which contradicts the hypothesis.
\item If $a: \phi \in \Psi_L$ then $a: {-} \phi \in \Psi_R$. Suppose that $a: \phi \in \Psi_L$.
Assume, by contradiction that $a: {-} \phi \notin \Psi_R$. Consider the branch $\Gamma'_1 \to \Delta'_1 \cdots\Gamma'_n \to \Delta'_n\ \Gamma'_n \to \Delta'_n, a: {-} \phi$ using in the last step the rule $(\rNB)$. Observe that this branch is still analytical. Indeed $a: {-} \phi$ does not occur in $(i_{R})_n$ and no fresh variables are involved. Hence we get a contradiction since the given branch is analytical and exhausted.
\item If $t_1 \cong t_2, a: {-} (t_1 \cdot p) \in \Psi_R$ then $a:t_2 \cdot p \in \Psi_R$. The proof is similar to one of (10).
\item If $a: \lnecb \phi \in \Psi_R$ then, there is $a' \in X'_A$ such that  $a \lap_{\var(\phi)} a', a': {-} \phi \in \Psi_L$. Assume that $a: \lnecb \phi \in \Psi_R$. Suppose, by contradiction, that 
for every $a' \in X'_A$ either $a \lap_{\var(\phi)} a' \notin \Psi_L$ or $a': {-} \phi \notin \Psi_L$. In both cases, rule $(\rBCR)$ was not applied to $a: \lnecb \phi$. Then $a: \lnecb \phi$ is in $\Delta'_n$. Consider the branch $\Gamma'_1 \to \Delta'_1 \cdots\Gamma'_n \to \Delta'_n\ a \lap_{\var(\phi)} b, b: {-} \phi,\Gamma'_n \to (\Delta'_n \setminus \{a: \lnecb \phi\})$ using in the last step the rule $(\rBCR)$ where $b$ is not in $\Gamma'_n \cup \Delta'_n$. Observe that this branch is still analytical. So the original branch is not exhausted which contradicts the hypothesis.
\item If $a: \lnecb \phi \in \Psi_L$ then,  for each $a' \in X'_A$ either $a \lap_{\var(\phi)} a' \in \Psi_R$
or $a': {-} \phi \in \Psi_R$. Assume that $a: \lnecb \phi \in \Psi_L$. Suppose, by contradiction, that there is $a' \in X'_A$ such that $a \lap_{\var(\phi)} a' \notin \Psi_R$ and  $a': {-} \phi \notin \Psi_R$. Observe that $a: \lnecb \phi$ is in $\Gamma'_n$. Consider the branch $\Gamma'_1 \to \Delta'_1 \cdots\Gamma'_n \to \Delta'_n\ \Gamma'_n \to \Delta'_n, a \lap_{\var(\phi)} a'$ using in the last step the rule $(\rBCL)$. Observe that this branch is still analytical. So the original branch is not exhausted which contradicts the hypothesis.
\item If $\phi \in \Psi_R$ then  either there exists $a' \in X'_A$ such that $a':{-} \phi, a':\lnecb {-}\phi \in \Psi_L$ or 
for each $a \in X'_A$  either $a:\phi \in \Psi_R$ or $a:\lnecb \phi  \in \Psi_R$. Suppose that $\phi \in \Psi_R$.
Assume, by contradiction, that the thesis does not hold. Then, $\phi \in \Delta'_n$ since rule $(\rBKR)$ was not applied to $\phi$ in the branch. Consider the branch $\Gamma'_1 \to \Delta'_1 \cdots\Gamma'_n \to \Delta'_n\ b:{-} \phi, b:\lnecb {-}\phi,\Gamma'_n \to \Delta'_n$ using in the last step the rule $(\rBKR)$ where $b$ is not in $\Gamma'_n \cup \Delta'_n$. Observe that this branch is still analytical. So the original branch is not exhausted which contradicts the hypothesis.
\item If $\phi \in \Psi_L$ then there is $a' \in X'_A$ such that $a':\phi, a':\lnecb \phi \in \Psi_L$ and for each $a \in X'_A$ either $a:{-} \phi \in \Psi_R$ or $a:\lnecb{-}  \phi \in \Psi_R$. Assume that $\phi \in \Psi_L$. Suppose, by contradiction and with no loss of generality, that for every $a' \in X'_A$, either $a':\phi \notin \Psi_L$ or $a':\lnecb \phi \notin \Psi_L$. 
Then rule $(\rKL)$ was not applied in the branch to $\phi$. Hence, $\phi$ is in $\Gamma'_n$.
Consider the branch $\Gamma'_1 \to \Delta'_1 \cdots\Gamma'_n \to \Delta'_n\ b:\phi, b:\lnecb \phi,\Gamma'_n \to \Delta'_n$ using in the last step the rule $(\rKL)$ where $b$ is not in $\Gamma'_n \cup \Delta'_n$. Observe that this branch is still analytical. So the original branch is not exhausted which contradicts the hypothesis.
\item If $a::(\phi_1,\dots,\phi_n / \phi) \in \Psi_R$ then if $\phi_1,\dots,\phi_n \in \Psi_L$ then
$a:\phi \in \Psi_R$. Assume that $a::(\phi_1,\dots,\phi_n / \phi) \in \Psi_R$ and $\phi_1,\dots,\phi_n \in \Psi_L$. Assume, by contradiction, that 
$a:\phi \notin \Psi_R$. Then rule $(\CR1)$ was not applied in the branch to $a::(\phi_1,\dots,\phi_n / \phi)$. Hence, $a::(\phi_1,\dots,\phi_n / \phi)$ is in $\Delta'_n$.
Consider the branch $\Gamma'_1 \to \Delta'_1 \cdots\Gamma'_n \to \Delta'_n\ \phi_1,\dots,\phi_n,\Gamma'_n \to \Delta'_n,a:\phi $ using in the last step the rule $(\CR1)$. Observe that this branch is still analytical. So the original branch is not exhausted which contradicts the hypothesis.
\item If $a::(\phi_1,\dots,\phi_n / \phi) \in \Psi_L$ then either $\phi_i\in \Psi_R$  for some $i=1,\dots,n$ or 
$a:\phi \in \Psi_L$. Suppose that $a::(\phi_1,\dots,\phi_n / \phi) \in \Psi_L$. Assume, by contradiction, that $\phi_i\notin \Psi_R$  for every $i=1,\dots,n$ and 
$a:\phi \notin \Psi_L$. Then rule $(\CR2)$ was not applied in the branch to $a::(\phi_1,\dots,\phi_n / \phi)$. Hence, $a::(\phi_1,\dots,\phi_n / \phi)$ is in $\Gamma'_n$.
Consider the branch $\Gamma'_1 \to \Delta'_1 \cdots\Gamma'_n \to \Delta'_n\ \Gamma'_n \to \Delta'_n,\phi_1$ using in the last step the rule $(\CR2)$. Observe that this branch is still analytical. So the original branch is not exhausted which contradicts the hypothesis.
\end{enumerate}
\end{proof}

\end{document}
